\newcommand{\LSH}{\mathcal{H}}
\newcommand{\E}{\mathrm{E}}
\newcommand{\Var}{\mathrm{Var}}
\DeclareMathOperator*{\argmin}{arg\,min}
\newcommand{\1}{\mathds{1}}
\newcommand{\norm}[1]{\left\lVert #1 \right\rVert}
\DeclareMathOperator{\dist}{dist}
\title{Fast Locality-Sensitive Hashing Frameworks for Approximate Near Neighbor Search\footnote{The research leading to these results has received funding from the European Research Council under the European Union’s 7th Framework Programme (FP7/2007-2013) / ERC grant agreement no.~614331.}}
\author[1]{Tobias Christiani}
\affil[1]{IT University of Copenhagen\\
  \texttt{tobc@itu.dk}}
\authorrunning{T. Christiani}
\subjclass{E.1 Data Structures, H.3.3 Information Search and Retrieval}
\keywords{locality-sensitive hashing, approximate near neighbors, similarity search}
\begin{document}

\maketitle

\begin{abstract}
The Indyk-Motwani Locality-Sensitive Hashing (LSH) framework (STOC 1998) is a general technique for constructing a data structure to answer approximate near neighbor queries by using a distribution $\LSH$ over locality-sensitive hash functions that partition space.
For a collection of $n$ points, after preprocessing, the query time is dominated by $O(n^{\rho} \log n)$ evaluations of hash functions from $\LSH$ and $O(n^{\rho})$ hash table lookups and distance computations where $\rho \in (0,1)$ is determined by the locality-sensitivity properties of $\LSH$. 
It follows from a recent result by Dahlgaard et al.~(FOCS 2017) that the number of locality-sensitive hash functions can be reduced to $O(\log^2 n)$, leaving the query time to be dominated by $O(n^{\rho})$ distance computations and $O(n^{\rho} \log n)$ additional word-RAM operations.
We state this result as a general framework and provide a simpler analysis showing that the number of lookups and distance computations closely match the Indyk-Motwani framework, making it a viable replacement in practice. 
Using ideas from another locality-sensitive hashing framework by Andoni and Indyk (SODA 2006) we are able to reduce the number of additional word-RAM operations to $O(n^\rho)$.
\end{abstract}
\section{Introduction}
The approximate near neighbor problem is the problem of preprocessing a collection $P$ of $n$ points in a space $(X, \dist)$ into a data structure such that, for parameters $r_1 < r_2$ and given a query point $q \in X$, if there exists a point $x \in P$ with $\dist(q, x) \leq r_1$, then the data structure is guaranteed to return a point $x' \in P$ such that $\dist(q, x') < r_2$.

Indyk and Motwani \cite{indyk1998} introduced a general framework for constructing solutions to the approximate near neighbor problem using a technique known as locality-sensitive hashing~(LSH). 
The framework takes a distribution over hash functions $\LSH$ with the property that near points are more likely to collide under a random $h \sim \LSH$. 
During preprocessing a number of locality-sensitive hash functions are sampled from $\LSH$ and used to hash the points of $P$ into buckets. 
The query algorithm evaluates the same hash functions on the query point and looks into the associated buckets to find an approximate near neighbor.

The locality-sensitive hashing framework of Indyk and Motwani has had a large impact in both theory and practice (see surveys \cite{andoni2008} and \cite{wang2014} for an introduction), and many of the best known solutions to the approximate near neighbor problem in high-dimensional spaces, such as Euclidean space \cite{andoni2006}, the unit sphere under inner product similarity \cite{andoni2015practical}, and sets under Jaccard similarity \cite{broder2000} come in the form of families of locality-sensitive hash functions that can be plugged into the Indyk-Motwani LSH framework. 
\begin{definition}[Locality-sensitive hashing {\cite{indyk1998}}]\label{def:lsh}
Let $(X, \dist)$ be a distance space and let $\LSH$ be a distribution over functions $h \colon X \to R$.
We say that $\LSH$ is $(r_1, r_2, p_1, p_2)$-sensitive if for $x, y \in X$ and $h \sim \LSH$ we have that:
\begin{itemize}
\item If $\dist(x, y) \leq r_1$ then $\Pr[h(x) = h(y)] \geq p_1$.
\item If $\dist(x, y) \geq r_2$ then $\Pr[h(x) = h(y)] \leq p_2$.
\end{itemize}
\end{definition}

\noindent
The Indyk-Motwani framework takes a $(r_1, r_2, p_1, p_2)$-sensitive family $\LSH$ and constructs a data structure that solves the approximate near neighbor problem for parameters $r_1 < r_2$ with some positive constant probability of success. 
We will refer to this randomized approximate version of the near neighbor problem as the $(r_1, r_2)$-near neighbor problem, where we require queries to succeed with probability at least $1/2$ (see Definition \ref{def:ann}).  
To simplify the exposition we will assume throughout the introduction, unless otherwise stated, that $0 < p_1 < p_2 < 1$ are constant, that a hash function $h \in \LSH$ can be stored in $n / \log n$ words of space, and for $\rho = \log(1/p_1) / \log(1/p_2) \in (0,1)$ that a point $x \in X$ can be stored in $O(n^\rho)$ words of space.
The assumption of a constant gap between $p_1$ and $p_2$ allows us to avoid performing distance computations by instead using the $1$-bit sketching scheme of Li and K{\"o}nig~\cite{li2011theory} together with the family $\LSH$ to approximate distances (see Section \ref{sec:sketching} for details).
In the remaining part of the paper we will state our results without any such assumptions to ensure, for example, that our results hold in the important case where $p_1, p_2$ may depend on $n$ or the dimensionality of the space~\cite{andoni2006, andoni2015practical}.  
\begin{theorem}[Indyk-Motwani {\cite{indyk1998, har-peled2012}, simplified}]\label{thm:lsh_im_simple}
Let $\LSH$ be $(r_1, r_2, p_1, p_2)$-sensitive and let $\rho = \frac{\log(1/p_1)}{\log(1/p_2)}$, then there exists a solution to the $(r_1, r_2)$-near neighbor problem using $O(n^{1+\rho})$ words of space and with query time dominated by $O(n^{\rho} \log n)$ evaluations of functions from~$\LSH$.
\end{theorem}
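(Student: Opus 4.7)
The plan is to follow the classical Indyk-Motwani construction: amplify the gap between collision probabilities by concatenation, then boost the success probability by independent repetitions. Concretely, set $k = \lceil \log n / \log(1/p_2) \rceil$ and $L = \lceil n^{\rho} \rceil$. For each $j \in \{1,\ldots,L\}$ sample $k$ independent $h_{j,1},\ldots,h_{j,k} \sim \LSH$ and form the concatenated hash $g_j(x) = (h_{j,1}(x),\ldots,h_{j,k}(x))$. Build $L$ hash tables $T_1,\ldots,T_L$, where $T_j$ stores every $x \in P$ under the key $g_j(x)$. By the choice of $k$, any pair at distance $\geq r_2$ collides under a fixed $g_j$ with probability at most $p_2^k \leq 1/n$, while any pair at distance $\leq r_1$ collides with probability at least $p_1^k \geq p_1 \cdot n^{-\rho}$.

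The query algorithm evaluates $g_j(q)$ for each $j$, scans the bucket $T_j[g_j(q)]$, and returns the first point found within distance $r_2$ of $q$; to control the running time I would abort once the total number of candidates inspected exceeds $3L$. Correctness rests on two events: (E1) if a near neighbor $x^* \in P$ with $\dist(q,x^*)\leq r_1$ exists, then $g_j(q) = g_j(x^*)$ for at least one $j$, which happens with probability at least $1 - (1 - p_1 n^{-\rho})^L$, a constant bounded away from zero; and (E2) the total number of far collisions across all $L$ tables has expectation at most $n \cdot L \cdot (1/n) = L$, so by Markov it is at most $3L$ with probability at least $2/3$. Both events therefore hold jointly with constant probability, and a constant number of independent repetitions lifts the success probability above $1/2$ as required by Definition~\ref{def:ann}.

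For the resources: each hash table stores $n$ pointers, giving $O(nL) = O(n^{1+\rho})$ words of space (hash function storage is absorbed under the stated assumption of $n/\log n$ words per $h$). Evaluating all $g_j(q)$ requires $kL = O(n^\rho \log n)$ evaluations of functions from $\LSH$, which dominates the query cost under the simplifying constant-gap assumption; the $O(L) = O(n^\rho)$ candidate scan is a lower-order term, and distance computations are cheap under the $1$-bit sketching replacement referenced in the text.

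The main obstacle is coordinating events (E1) and (E2). The cleanest route is to fix constants so that each holds with probability $\geq 5/6$ (tuning the candidate cutoff and, if necessary, running $O(1)$ independent copies of the whole structure) and then union-bound. A secondary technical point is absorbing the ceiling in $k$ so that $p_1^k$ is only a constant factor smaller than $n^{-\rho}$, which is where the $p_1$ factor appears in the success probability; this costs only a constant blow-up in $L$ and does not affect the stated asymptotics.
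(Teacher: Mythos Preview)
Your proposal is correct and follows essentially the same construction as the paper: concatenate $k=\lceil \log n/\log(1/p_2)\rceil$ hash functions, repeat $L=O(n^\rho)$ times, and use Markov's inequality plus a candidate cutoff to convert the expected bound on far collisions into a worst-case guarantee. The only cosmetic difference is that the paper sets $L=\lceil (\ln 2)/p_1^k\rceil$ so that event (E1) holds with probability $\geq 1/2$ in a single copy, whereas you set $L=\lceil n^\rho\rceil$ and push the extra $1/p_1$ factor into the constant number of independent repetitions; under the standing assumption that $p_1$ is constant these are equivalent.
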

The query time of the Indyk-Motwani framework is dominated by the number of evaluations of locality-sensitive hash functions. 
To make matters worse, almost all of the best known and most widely used locality-sensitive families have an evalution time that is at least linear in the dimensionality of the underlying space~\cite{broder2000, charikar2002, datar2004, andoni2006, andoni2015practical}.
Significant effort has been devoted to the problem of reducing the evaluation complexity of locality-sensitive hash families~\cite{terasawa2007spherical, eshgi2008, dasgupta2011fast, andoni2015practical, kennedy2017fast, shrivastava2016simple, shrivastava2017optimal, dahlgaard2017fast}, while the question of how many independent locality-sensitive hash functions are actually needed to solve the $(r_1, r_2)$-near neighbor problem has received relatively little attention~\cite{andoni2006efficient, dahlgaard2017fast}.  

This paper aims to bring attention to, strengthen, generalize, and simplify results that reduce the number of locality-sensitive hash functions used to solve the $(r_1, r_2)$-near neighbor problem. In particular, we will extract a general framework from a technique introduced by Dahlgaard et al.~\cite{dahlgaard2017fast} in the context of set similarity search under Jaccard similarity, showing that the number of locality-sensitive hash functions can be reduced to $O(\log^2 n)$ in general. 
We further show how to reduce the word-RAM complexity of the general framework from $O(n^\rho \log n)$ to $O(n^\rho)$ by combining techniques from Dahlgaard et al. and Andoni and Indyk~\cite{andoni2006efficient}.
Reducing the number of locality-sensitive hash functions allows us to spend time $O(n^\rho / \log^2 n)$ per hash function evaluation without increasing the overall complexity of the query algorithm --- something which is particularly useful in Euclidean space where the best known LSH upper bounds offer a tradeoff between the $\rho$-value that can be achieved and the evaluation complexity of the locality-sensitive hash function~\cite{andoni2006, andoni2015practical, kennedy2017fast}.
\subsection{Related work}
\subparagraph*{Indyk-Motwani.}
The Indyk-Motwani framework uses $L = O(n^\rho)$ independent partitions of space, each formed by overlaying $k = O(\log n)$ random partitions induced by $k$ random hash functions from a locality-sensitive family $\LSH$.
The parameter $k$ is chosen such that a random partition has the property that a pair of points $x,y \in X$ with $\dist(x, y) \leq r_1$ has probability $n^{-\rho}$ of ending up in the same part of the partition, while a pair of points with $\dist(x, y) \geq r_2$ has probability $n^{-1}$ of colliding. 
By randomly sampling $L = O(n^\rho)$ such partitions we are able to guarantee that a pair of near points will collide with constant probability in at least one of them. 
Applying these $L$ partitions to our collection of data points $P$ and storing the result of each partition of $P$ in a hash table we obtain a data structure that solves the $(r_1, r_2)$-near neighbor problem as outlined in Theorem \ref{thm:lsh_im_simple} above. 
Section \ref{sec:frameworks} and \ref{sec:im} contains a more complete description of LSH-based frameworks and the Indyk-Motwani framework.

\subparagraph*{Andoni-Indyk.}
As previously mentioned, many locality-sensitive hash functions happen to have a super-constant evaluation time. 
This motivated Andoni and Indyk to introduce a replacement to the Indyk-Motwani framework in a paper on substring near neighbor search~\cite{andoni2006efficient}.  
The key idea is to re-use hash functions from a small collection of size $m \ll L$ by forming all combinations of $\binom{m}{t}$ hash functions. 
This technique is also known as tensoring and has seen some use in the work on alternative solutions to the approximate near neighbor problem, in particular the work on locality-sensitive filtering~\cite{dubiner2010bucketing, becker2016, christiani2017framework}.
By applying the tensoring technique the Andoni-Indyk framework reduces the number of hash functions to $O(\exp(\sqrt{\rho \log n \log \log n})) = n^{o(1)}$ as stated in Theorem \ref{thm:lsh_ai_simple}.
\begin{theorem}[Andoni-Indyk {\cite{andoni2006efficient}}, simplified]\label{thm:lsh_ai_simple}
	Let $\LSH$ be $(r_1, r_2, p_1, p_2)$-sensitive and let $\rho = \frac{\log(1/p_1)}{\log(1/p_2)}$, then there exists a solution to the $(r_1, r_2)$-near neighbor problem using $O(n^{1+\rho})$ words of space and with query time dominated by $O(\exp(\sqrt{\rho \log n \log \log n}))$ evaluations of functions from $\LSH$ and $O(n^\rho)$ other word-RAM operations.
\end{theorem}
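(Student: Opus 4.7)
The plan is to replicate Andoni and Indyk's \emph{tensoring} construction. Given the $(r_1, r_2, p_1, p_2)$-sensitive family $\LSH$, I fix integer parameters $k, t, m$ with $t$ dividing $k$ and sample $t m$ independent sub-hashes $g_{i,j}$, $i \in [t]$, $j \in [m]$, each a concatenation of $k/t$ independent draws from $\LSH$. For every tuple $(j_1, \ldots, j_t) \in [m]^t$, form an amplified hash function
\[
h^*_{(j_1, \ldots, j_t)} = (g_{1, j_1}, g_{2, j_2}, \ldots, g_{t, j_t}),
\]
which is marginally distributed as a concatenation of $k$ independent draws from $\LSH$. This produces $L = m^t$ partitions from only $m t$ sub-hashes, hence $m k$ evaluations of $\LSH$.

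The crux is the collision analysis. For a pair $(x, y)$ with $\Pr_{h \sim \LSH}[h(x) = h(y)] = p$, the indicators $q_{i, j} = \1[g_{i, j}(x) = g_{i, j}(y)]$ are mutually independent Bernoulli variables with parameter $p^{k/t}$, so factorizing across the $t$ blocks,
\[
\Pr\left[\exists (j_1, \ldots, j_t) : h^*_{(j_1, \ldots, j_t)}(x) = h^*_{(j_1, \ldots, j_t)}(y)\right] = \left(1 - (1 - p^{k/t})^m\right)^t.
\]
Setting $k = \lceil \log n / \log(1/p_2) \rceil$ as in the Indyk--Motwani framework, a near pair has $p_1^{k/t} \geq n^{-\rho/t}$ and a far pair has $p_2^{k/t} \leq n^{-1/t}$. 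Choosing $m = \Theta(n^{\rho/t} \log t)$ then pushes the near-pair collision probability to a positive constant, and the expected number of far-pair collisions across all $L$ amplified partitions is at most
\[
n \cdot (m \cdot n^{-1/t})^t \cdot (1 + o(1)) = m^t / n^{1 - o(1)} = n^{\rho + o(1)}.
\]

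It remains to balance parameters. Taking $t = \lfloor \sqrt{\rho \log n / \log \log n} \rfloor$ and repeating the construction $O(1)$ times to boost the per-query success probability above $1/2$, the total number of $\LSH$ evaluations is $m k = \exp\!\bigl(\sqrt{\rho \log n \log \log n}\,(1 + o(1))\bigr)$, the number of amplified buckets visited is $m^t = n^{\rho + o(1)}$, and the space is $O(n^{1 + \rho})$ for storing each of the $n$ points in each of the $L$ amplified buckets (the sub-hash descriptors sit in lower-order space under the paper's standing assumption).

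The main obstacle I anticipate is that the $L$ amplified hash functions are strongly dependent, which seems to break the Indyk--Motwani analysis. The factorization identity above sidesteps this: although the amplified hash functions share sub-hashes pairwise, the global collision event still decomposes as a product over the $t$ independent blocks, and the only price of the dependence is a logarithmic inflation in $m$. A secondary issue is the word-RAM cost of assembling the $L$ amplified hash values after the $m k$ sub-hash evaluations; enumerating $[m]^t$ in Gray-code order means each amplified value is obtained from the previous one by swapping a single sub-hash, keeping the bookkeeping within the stated $O(n^\rho)$ budget up to subpolynomial factors.
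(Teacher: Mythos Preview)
Your tensoring construction and the block-factorization identity for the collision probability are exactly the right ideas, and they match the paper's approach. The gap is in the parameter balancing: with your choice $m = \Theta(n^{\rho/t}\log t)$ and only $O(1)$ repetitions, the number of amplified partitions is
\[
L \;=\; m^t \;=\; \Theta(\log t)^t \cdot n^{\rho},
\]
and with $t \approx \sqrt{\rho\log n/\log\log n}$ the factor $(\log t)^t$ is $\exp\bigl(\Theta(t\log\log t)\bigr) = n^{o(1)}$ but \emph{not} $O(1)$. You acknowledge this yourself (``$m^t = n^{\rho+o(1)}$'', ``up to subpolynomial factors''), but the theorem as stated demands the exact bounds $O(n^{1+\rho})$ on space and $O(n^\rho)$ on word-RAM operations, so your argument does not prove it.

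The fix, which is what the paper (following Andoni and Indyk) does, is to take $m$ \emph{smaller} rather than larger: set $m \approx n^{\rho/t}/t$, so that a single instance succeeds only with probability roughly $(m\,p_1^{k/t})^t \approx (1/t)^t$, and then perform $\eta = O(t^t)$ independent repetitions of the whole scheme. The point is that now
\[
L \;=\; \eta\, m^t \;=\; O(t^t)\cdot \bigl(n^{\rho/t}/t\bigr)^t \;=\; O(n^\rho),
\]
with the $t^t$ from $\eta$ cancelling exactly against the $1/t^t$ from $m^t$, while the total number of $\LSH$-evaluations is $H = \eta\, m\, k = O\bigl(t^{t-1} k\, n^{\rho/t}\bigr)$, which is still $\exp\bigl(O(\sqrt{\rho\log n\log\log n})\bigr)$. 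A secondary issue is your assumption that $t$ divides $k$; the paper handles the remainder by adding one extra block of $m_2 = \lceil 1/p_1^{k_2}\rceil$ sub-hashes from $\LSH^{k_2}$ with $k_2 = k - t\lfloor k/t\rfloor$, which does not affect the asymptotics.
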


The paper by Andoni and Indyk did not state this result explicitly as a theorem in the same form as the Indyk-Motwani framework; the analysis made some implicit restrictive assumptions on $p_1, p_2$ and ignored integer constraints. 
Perhaps for these reasons the result does not appear to have received much attention, although it has seen some limited use in practice~\cite{sundaram2013streaming}.
In Section \ref{sec:ai} we present a slightly different version of the Andoni-Indyk framework together with an analysis that satisfies integer constraints, providing a more accurate assessment of the performance of the framework in the general, unrestricted case.   

\subparagraph*{Dahlgaard-Knudsen-Throup.}
The paper by Dahlgaard et al.~\cite{dahlgaard2017fast} introduced a different technique for constructing the $L$ hash functions/partitions from a smaller collection of $m$ hash functions from $\LSH$. 
Instead of forming all combinations of subsets of size $t$ as the Andoni-Indyk framework they instead sample $k$ hash functions from the collection to form each of the $L$ partitions.
The paper focused on a particular application to set similarity search under Jaccard similarity, and stated the result in terms of a solution to this problem. 
In Section \ref{sec:dkt} we provide a simplified and tighter analysis to yield a general framework:
\begin{theorem}[Dahlgaard-Knudsen-Thorup {\cite{dahlgaard2017fast}}, simplified]\label{thm:lsh_dkt_simple}
	Let $\LSH$ be $(r_1, r_2, p_1, p_2)$-sensitive and let $\rho = \frac{\log(1/p_1)}{\log(1/p_2)}$, then there exists a solution to the $(r_1, r_2)$-near neighbor problem using $O(n^{1+\rho})$ words of space and with query time dominated by $O(\log^2 n)$ evaluations of functions from $\LSH$ and $O(n^\rho \log n)$ other word-RAM operations.
\end{theorem}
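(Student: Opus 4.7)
The plan is to follow the Dahlgaard-Knudsen-Thorup recipe: sample a pool of $m = \Theta(\log^2 n)$ hash functions $h_1,\dots,h_m$ i.i.d.\ from $\LSH$, and for each of $L = \Theta(n^\rho)$ partitions independently draw an index tuple $I_\ell \in [m]^k$ uniformly at random (with replacement), where $k = \lceil \log_{1/p_2} n \rceil$. The $\ell$-th partition is induced by $H_\ell(x) = (h_i(x))_{i \in I_\ell}$, and $P$ is stored in $L$ hash tables keyed by $H_\ell$. A query $q$ first evaluates $h_1(q),\dots,h_m(q)$ once and caches the results---this is the only step involving $\LSH$-evaluations and accounts for the $O(\log^2 n)$ term in the theorem. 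Each $H_\ell(q)$ is then assembled from the cache in $O(k)$ word-RAM operations, contributing $O(Lk) = O(n^\rho\log n)$ total, followed by $O(n^\rho)$ bucket probes and candidate verifications (aborted after a fixed multiple of $L$ candidates, in the standard Indyk-Motwani fashion).

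For completeness, fix a near pair $(q,x^*)$ with $\dist(q,x^*)\leq r_1$, let $c_i = \1[h_i(q) = h_i(x^*)]$ and $\alpha = \tfrac{1}{m}\sum_i c_i$, so the $c_i$ are i.i.d.\ Bernoullis with $\E[\alpha]\geq p_1$. A Chernoff bound with relative deviation $\Theta(1/k)$ yields $\alpha \geq (1-1/k)p_1$ with probability $\geq 7/8$ once $m = \Omega(k^2/p_1) = \Omega(\log^2 n)$. Conditioning on this event gives $\alpha^k \geq \Omega(p_1^k) = \Omega(n^{-\rho})$, and independence of the tuples $I_\ell$ yields $\Pr[\exists\ell:\,H_\ell(q)=H_\ell(x^*)\mid \alpha] \geq 1 - \exp(-L\alpha^k)$, which can be made any constant below $1$ by choosing $L = \Theta(n^\rho)$ with a sufficiently large constant factor.

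For soundness, the expected number of spurious collisions seen by the query is $nL\cdot\E[\alpha^k]$, where $\alpha$ now corresponds to a far pair and satisfies $\E[c_i]\leq p_2$. The crucial estimate is $\E[\alpha^k] = O(p_2^k) = O(1/n)$. Observe that $\E[\alpha^k]$ equals the joint probability, over the hash functions and a uniform $I\in[m]^k$, that $h_{i_j}(x)=h_{i_j}(y)$ for all $j$; conditioning on the number of distinct indices $D = |\{i_1,\dots,i_k\}|$ and using independence of distinct $h_i$ gives $\E[\alpha^k] = \E_I[p^{D}]$ with $p\leq p_2$. Since $k-D$ is stochastically dominated by $\mathrm{Binomial}(k, k/m)$, one obtains $\E[(1/p_2)^{k-D}] \leq \exp(O(k^2/(p_2 m))) = O(1)$ as soon as $m = \Omega(k^2/p_2)$, so $\E[\alpha^k] \leq O(p_2^k) = O(1/n)$. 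A Markov bound on the candidate count, combined with the completeness estimate and appropriately balanced constants, yields success probability $\geq 1/2$; the total space is $O(nL + m) = O(n^{1+\rho})$.

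The main obstacle is this final expectation. Jensen points the wrong way ($\E[\alpha^k]\geq(\E\alpha)^k$), and a crude Chernoff argument requiring $\alpha \leq (1+1/k)p_2$ with failure probability $1/n$ would inflate $m$ by an extra $\log n$ factor and break the theorem. Decomposing by the number of distinct sampled indices is what keeps the pool size at the $\Theta(\log^2 n)$ level claimed in the theorem; everything else is routine bookkeeping to verify the space and word-RAM counts and to balance the constants across the completeness, soundness, and abort trade-off.
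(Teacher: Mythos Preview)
Your argument is correct for the simplified statement, but it departs from the paper's construction and analysis in several respects.

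\textbf{Construction.} The paper does \emph{not} use a single pool of $m$ functions with tuples $I_\ell\in[m]^k$ sampled with replacement. Instead it keeps $k$ \emph{disjoint} pools $h_{i,1},\dots,h_{i,m}$ (one per coordinate $i\in[k]$) and selects indices via $k$ \emph{pairwise independent} hash functions $f_i\colon[L]\to[m]$, setting $g_\ell(x)=(h_{1,f_1(\ell)}(x),\dots,h_{k,f_k(\ell)}(x))$.

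\textbf{Soundness.} This design choice makes what you call ``the main obstacle'' disappear entirely: because coordinate $i$ always draws from its own pool, the marginal collision probability of a far pair under a single $g_\ell$ is exactly $p^k\le p_2^k$, just as in Indyk--Motwani, and the expected number of far collisions is at most $L$ with no further work. Your distinct-index decomposition is a valid alternative, but it is solving a difficulty that the paper's construction avoids by design, and it forces your pool size to scale with $k^2/p_2$ rather than the paper's $k^2/p_1$ (harmless here since $p_1,p_2$ are constant, but worse in the unrestricted version).

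\textbf{Completeness.} The paper does not condition on an empirical collision rate via Chernoff. It applies the one-sided Chebyshev (Cantelli) inequality directly to $Z=\sum_\ell \1\{g_\ell(x)=g_\ell(y)\}$, bounding $\Var[Z]$ by exploiting the pairwise independence of the $f_i$ together with independence across coordinates. This second-moment route needs only pairwise-independent selectors and yields tight constants: the paper shows $L=\lceil 2\ln 2/p_1^k\rceil$ suffices, i.e.\ just twice the Indyk--Motwani count. Your two-stage Chernoff-then-condition argument is clean and also works, but it requires fully random tuples $I_\ell$ and looser constants.

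In short: both routes prove the theorem; the paper's separate-pool, pairwise-independent design trades your single-pool simplicity for a free soundness bound, lighter randomness, sharper constants, and a $1/p_1$ rather than $1/p_2$ dependence in the pool size.
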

The analysis of \cite{dahlgaard2017fast} indicates that the Dahlgaard-Knudsen-Thorup framework, when compared to the Indyk-Motwani framework, would use at least $50$ times as many partitions (and a corresponding increase in the number of hash table lookups and distance computations) to solve the $(r_1, r_2)$-near neighbor problem with success probability at least $1/2$. 
Using elementary tools, the analysis in this paper shows that we only have to use twice as many partitions as the Indyk-Motwani framework to obtain the same guarantee of success.

\subparagraph*{Number of hash functions in practice.}
To provide some idea of what the number of hash functions $H$ used by the different frameworks would be in practice, Figure \ref{fig:comparison} shows the value of $\log_2 H$ that is obtained by actual implementations of the Indyk-Motwani (IM), Andoni-Indyk (AI), and Dahlgaard-Knudsen-Thorup (DKT) frameworks according to the analysis in Section \ref{sec:frameworks} for $p_1 = 1/2$ and every value of $0 < p_2 < 1/2$ for a solution to the $(r_1, r_2)$-near neighbor problem on a collection of $n = 2^{30}$ points.
\begin{figure} 
	\centering
	\includegraphics[width=0.67\textwidth]{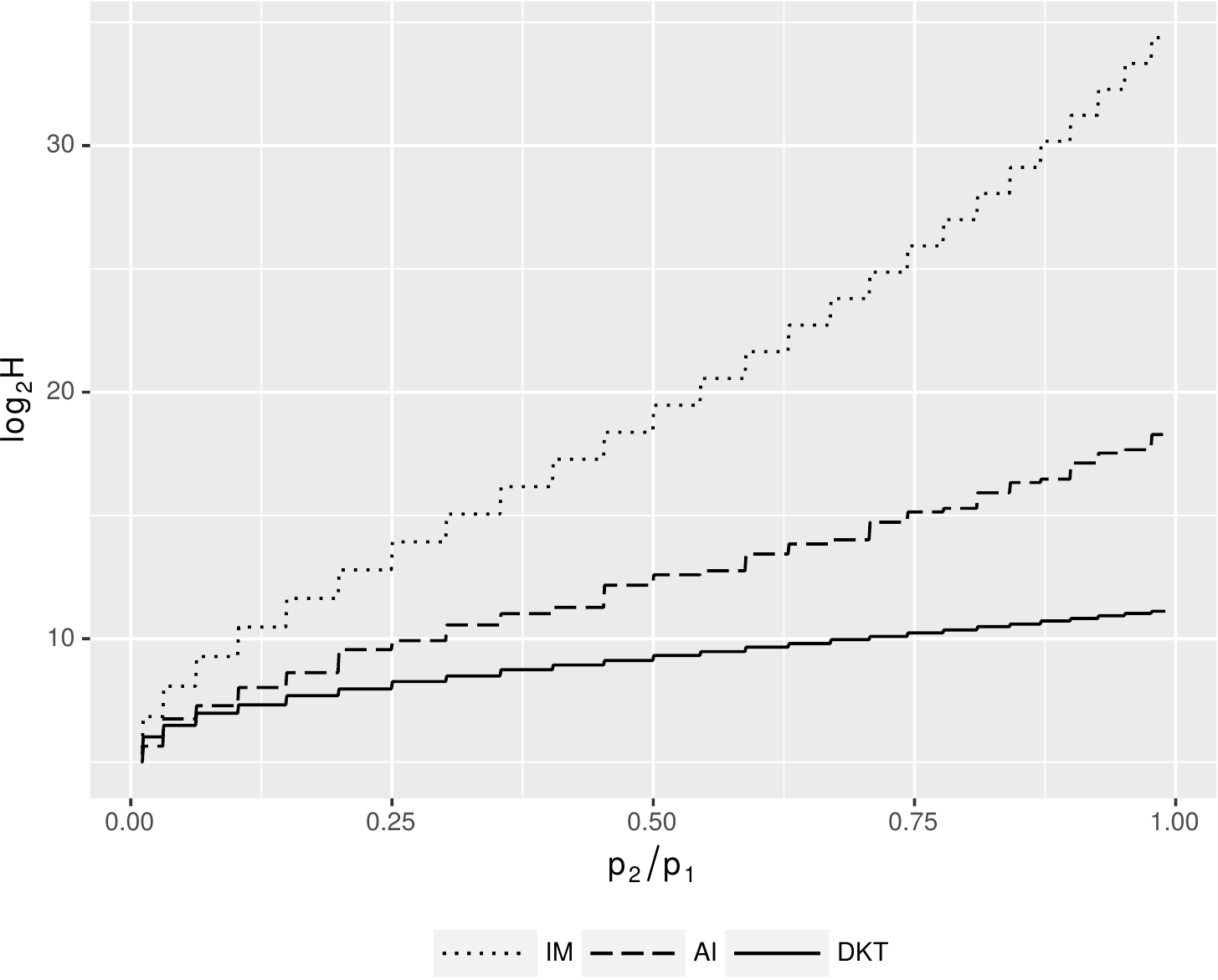}
	\caption{The exact number of locality-sensitive hash functions from a $(r_1, r_2, 0.5, p_2)$-sensitive family used by different frameworks to solve the $(r_1, r_2)$-near neighbor problem on a collection of $2^{30}$ points according to the analysis in this paper.}
	\label{fig:comparison}
\end{figure}
Figure \ref{fig:comparison} reveals that the number of hash functions used by the Indyk-Motwani framework exceeds $2^{30}$, the size of the collection of points $P$, as $p_2$ approaches $p_1$. 
In addition, locality-sensitive hash functions used in practice such as Charikar's SimHash~\cite{charikar2002} and $p$-stable LSH~\cite{datar2004} have evaluation time $O(d)$ for points in $\mathbb{R}^d$. 
These two factors might help explain why a linear scan over sketches of the entire collection of points is a popular approach to solve the approximate near neighbor problem in practice~\cite{weiss2008spectral, gong2012angular}.
The Andoni-Indyk framework reduces the number of hash functions by several orders of magnitude, and the Dahlgaard-Knudsen-Thorup framework presents another improvement of several orders of magnitude. 
Since the word-RAM complexity of the DKT framework matches the the number of hash functions used by the IM framework, the gap between the solid line (DKT) and the dotted line (IM) gives some indication of the time we can spend on evaluating a single hash function in the DKT framework without suffering a noticeable increase in the query time. 
\subsection{Contribution}
\subparagraph*{Improved word-RAM complexity.}
In addition to our work on the Andoni-Indyk and Dahlgaard-Knudsen-Thorup frameworks as mentioned above, we show how the word-RAM complexity of the DKT framework can be reduced by a logarithmic factor. 
The solution is a simple combination of the DKT sampling technique and the AI tensoring technique: 
First we use the DKT sampling technique twice to construct two collections of $\sqrt{L}$ partitions.
Then we use the AI tensoring technique to form $L = \sqrt{L} \times \sqrt{L}$ pairs of partitions from the two collections. 
Below we state our main Theorem \ref{thm:lsh_dkt_ram} in its general form where we make no implicit assumptions about $\LSH$ ($p_1$ and $p_2$ are not assumed to be constant and can depend on for example $n$) or about the complexity of storing a point or a hash function, or computing the distance between pairs of points in the space $(X, \dist)$.
\begin{theorem}\label{thm:lsh_dkt_ram}
	Let $\LSH$ be $(r_1, r_2, p_1, p_2)$-sensitive and let $\rho = \log(1/p_1) / \log(1/p_2)$, then there exists a solution to the $(r_1, r_2)$-near neighbor with the following properties:
\begin{itemize}
	\item The query complexity is dominated by $O(\log_{1/p_2}^{2}(n)/p_1)$ evaluations of functions from $\LSH$, $O(n^\rho)$ distance computations, and $O(n^{\rho} / p_1)$ other word-RAM operations.
	\item The solution uses $O(n^{1 + \rho} /p_1)$ words of space in addition to the space required to store the data and $O(\log_{1/p_2}^{2}(n)/p_1)$ functions from $\LSH$.
\end{itemize}
\end{theorem}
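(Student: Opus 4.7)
The plan is to combine the DKT sampling technique from Theorem \ref{thm:lsh_dkt_simple} with the AI tensoring trick from Theorem \ref{thm:lsh_ai_simple}. Set $k = \lceil \log_{1/p_2} n \rceil$ and $L = \Theta(n^{\rho}/p_1)$. I would build two \emph{disjoint} pools of $m = O(k^2/p_1)$ hash functions sampled from $\LSH$, and run the DKT sampling scheme on each pool to produce collections $\mathcal{A}$ and $\mathcal{B}$, each consisting of $\sqrt{L}$ partitions formed by concatenating $k/2$ functions drawn uniformly from the corresponding pool. The $L$ partitions actually used by the data structure are then the tensored pairs $(a,b) \in \mathcal{A}\times \mathcal{B}$, where a point $x$ is hashed to $(a(x), b(x))$.

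For the collision analysis, note that because $\mathcal{A}$ and $\mathcal{B}$ are built from disjoint pools, for any fixed tensored pair $(a,b)$ the events $\{a(x)=a(y)\}$ and $\{b(x)=b(y)\}$ are independent. For a close pair ($\dist(x,y)\le r_1$) the tensored-pair collision probability is therefore at least $p_1^k \geq 1/n$, so the expected number of witnessing pairs is $\Omega(L/n) = \Omega(1/p_1)$. A second-moment computation of the same flavor as the one worked out in Section \ref{sec:dkt} then gives probability at least $1/2$ of having at least one witnessing pair, once the leading constant in $L$ is tuned. For a far pair ($\dist(x,y)\geq r_2$) the expected number of spurious tensored-pair collisions is at most $L\cdot p_2^k \le L/n = O(1/p_1)$, and summing over the data set and applying Markov yields the claimed $O(n^{\rho})$ distance computations with constant probability.

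For the word-RAM bookkeeping, the key observation is that each pool function is evaluated on the query $q$ only once, giving the $O(\log_{1/p_2}^{2}(n)/p_1)$ hash function evaluations. Building the $\sqrt{L}$ partition labels inside each of $\mathcal{A}$ and $\mathcal{B}$ costs $O(\sqrt{L}\cdot k) = O(n^{\rho/2}\log n/\sqrt{p_1}) = o(n^{\rho})$ word-RAM operations. Once these two arrays of labels are precomputed, each of the $L$ tensored buckets is addressed in $O(1)$ time by concatenating one entry from each array, which accounts for the $O(n^{\rho}/p_1)$ other word-RAM operations; the space bound is standard once nonempty buckets only are stored.

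The main obstacle is controlling the correlations that arise from sampling and tensoring at the same time. Two tensored pairs may share their $\mathcal{A}$-factor (same row of the $\sqrt{L}\times\sqrt{L}$ grid) or their $\mathcal{B}$-factor (same column), and \emph{within} a row or column the DKT sampling introduces further dependency because the $k/2$ functions defining distinct partitions can be drawn from the same pool slots. I would handle this by computing the variance of the number of witnessing tensored pairs in two layers: within a fixed row (or column) I would reuse the single-collection DKT second-moment estimate from Section \ref{sec:dkt}, and across distinct rows and columns I would exploit that the $\mathcal{A}$-pool and $\mathcal{B}$-pool are disjoint to conclude the relevant cross-terms factor cleanly. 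Summing the row and column contributions should bound the variance by a constant factor times the mean squared, and Paley--Zygmund then delivers the constant success probability at the claimed $L = O(n^{\rho}/p_1)$.
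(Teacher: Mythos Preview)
Your construction --- two independent DKT collections tensored into an $\sqrt{L}\times\sqrt{L}$ grid --- is exactly what the paper does. But your correctness argument is considerably more involved than necessary, and contains an arithmetic slip.

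The paper avoids any second-moment computation on the tensored pairs. The observation is simply that a tensored pair $(a,b)$ witnesses a collision if and only if $a$ collides \emph{and} $b$ collides, so the event ``some tensored pair collides'' is exactly ``some $a\in\mathcal{A}$ collides'' \emph{and} ``some $b\in\mathcal{B}$ collides''. The paper then applies Lemma~\ref{lem:dkt_success} with $\varepsilon=1/6$ and $L_i=\lceil 6/p_1^{k_i}\rceil$ (where $k_1=\lceil k/2\rceil$, $k_2=\lfloor k/2\rfloor$) to each collection \emph{separately}, giving each event probability at least $3/4$; a union bound over the two failure events finishes the proof. There is no need to track correlations between tensored pairs sharing a row or column, nor to run a two-layer variance computation followed by Paley--Zygmund.

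Your route would also work in principle --- since the pools are disjoint, the number of witnessing tensored pairs factors as $Z=Z_{\mathcal{A}}\cdot Z_{\mathcal{B}}$ with $Z_{\mathcal{A}},Z_{\mathcal{B}}$ independent, and the DKT second-moment bound on each factor controls $\E[Z^2]=\E[Z_{\mathcal{A}}^2]\,\E[Z_{\mathcal{B}}^2]$ --- but once you see this factorisation the question has already reduced to the single-collection lemma, so the row/column decomposition you outline is superfluous. Note also an arithmetic slip: $p_1^k\approx n^{-\rho}$, not $1/n$ (that is $p_2^k$); with $L=\Theta(n^\rho/p_1)=\Theta(1/p_1^k)$ the expected number of witnessing tensored pairs is $\Theta(1)$, not $\Omega(1/p_1)$, and your expression $\Omega(L/n)=\Omega(1/p_1)$ is off by a factor $n^{1-\rho}$.
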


Under the same simplifying assumptions used in the statements of Theorem \ref{thm:lsh_im_simple}, \ref{thm:lsh_ai_simple}, and \ref{thm:lsh_dkt_simple}, our main Theorem \ref{thm:lsh_dkt_ram} can be stated as Theorem \ref{thm:lsh_dkt_simple} with the word-RAM complexity reduced by a logarithmic factor to $O(n^\rho)$. 
This improvement in the word-RAM complexity comes at the cost of a (rather small) constant factor increase in the number of hash functions, lookups, and distance computations compared to the DKT framework.
By varying the size $m$ of the collection of hash functions from $\LSH$ and performing independent repetitions we can obtain a tradeoff between the number of hash functions and the number of lookups.
In Section~\ref{sec:corner} we remark on some possible improvements in the case where $p_2$ is large.

\subparagraph*{Distance sketching using LSH.}
Finally, we combine Theorem \ref{thm:lsh_dkt_ram} with the 1-bit sketching scheme of Li and K{\"o}nig~\cite{li2011theory} where we use the locality-sensitive hash family to create sketches that allow us to leverage word-level parallelism and avoid direct distance computations. 
This sketching technique is well known and has been used before in combination with LSH-based approximate similarity search~\cite{christiani2017scalable}, but we believe there is some value in the simplicity of the analysis and in a clear statement of the combination of the two results as given in Theorem~\ref{thm:lsh_dkt_ram_sketch}, for example in the important case where $0 < p_2 < p_1 < 1$ are constant.
\begin{theorem}\label{thm:lsh_dkt_ram_sketch}
	Let $\LSH$ be $(r_1, r_2, p_1, p_2)$-sensitive and let $\rho = \log(1/p_1) / \log(1/p_2)$, then there exists a solution to the $(r_1, r_2)$-near neighbor with the following properties:
\begin{itemize}
	\item The complexity of the query operation is dominated by $O(\log^2(n)/(p_1 - p_2)^2)$ evaluations of hash functions from $\LSH$ and $O(n^{\rho}/(p_1 - p_2)^2)$ other word-RAM operations.
	\item The solution uses $O(n^{1 + \rho}/p_1 + n/(p_1 - p_2)^2)$ words of space in addition to the space required to store the data and $O(\log^2(n)/(p_1 - p_2)^2)$ hash functions from $\LSH$.
\end{itemize}
\end{theorem}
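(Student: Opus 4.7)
The plan is to bolt the Li--K{\"o}nig 1-bit sketch on top of the data structure from Theorem~\ref{thm:lsh_dkt_ram}, using it to replace each of the $O(n^\rho)$ distance computations by a single word-parallel sketch comparison. The $(r_1,r_2)$-near neighbor problem only requires distinguishing pairs at distance $\leq r_1$ from pairs at distance $\geq r_2$, and these two cases correspond to well-separated $\LSH$-collision probabilities, which is precisely what a sketch built from $\LSH$ can detect.

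First I would sample $\ell = \Theta(\log n / (p_1 - p_2)^2)$ fresh functions $h_1, \ldots, h_\ell \sim \LSH$ together with $\ell$ independent uniformly random maps $f_i \colon R \to \{0,1\}$, and define the sketch $s(x) = (f_1(h_1(x)), \ldots, f_\ell(h_\ell(x))) \in \cube{\ell}$. A direct calculation gives $\Pr[s(x)_i = s(y)_i] = (1 + p_{xy})/2$, where $p_{xy} = \Pr_{h \sim \LSH}[h(x) = h(y)]$, so the coordinate-wise agreement between $s(x)$ and $s(y)$ is an unbiased estimate of $(1 + p_{xy})/2$. A standard Chernoff bound then shows that thresholding the Hamming similarity at $(1 + (p_1+p_2)/2)/2$ correctly separates $p_{xy} \geq p_1$ from $p_{xy} \leq p_2$ with error probability at most $1/n^{2}$, provided the hidden constant in $\ell$ is large enough.

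Second I would invoke Theorem~\ref{thm:lsh_dkt_ram} as a black box: at query time it produces at most $O(n^\rho)$ candidates, and with constant probability this set contains a point at distance $\leq r_1$ whenever such a point exists. During preprocessing we additionally store $s(x)$ for every $x \in P$, costing an extra $O(n \ell / \log n) = O(n/(p_1-p_2)^2)$ words of space and using $\ell = O(\log n/(p_1-p_2)^2)$ extra functions drawn from $\LSH$. At query time we compute $s(q)$ and compare it against the sketch of each candidate with word-level parallelism, using $O(\lceil \ell / \log n \rceil) = O(1/(p_1-p_2)^2)$ word operations per candidate; we return the first candidate whose sketch passes the threshold. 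Since the sketching randomness is independent of the randomness inside the DKT data structure, conditioning on the realized candidate set and union bounding over at most $n$ possible comparisons shows that with probability $1 - O(1/n)$ no point at distance $\geq r_2$ passes the sketch test while every point at distance $\leq r_1$ does, so correctness follows by intersecting this event with the DKT success event.

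The only mildly fiddly step, and the closest thing to an obstacle, is the final bookkeeping to match the bounds as stated: one has to absorb the DKT term $O(\log_{1/p_2}^{2}(n)/p_1)$ hash function evaluations (and the analogous space term) into the target $O(\log^{2}(n)/(p_1-p_2)^2)$. This uses the elementary inequalities $\log(1/p_2) \geq 1 - p_2$ and $(1-p_2)^2 p_1 \geq (p_1 - p_2)^2$ (the latter factoring as $(1-\sqrt{p_1})(\sqrt{p_1}+p_2) \geq 0$). The $O(n^\rho)$ distance computations of Theorem~\ref{thm:lsh_dkt_ram} are then replaced by $O(n^\rho / (p_1-p_2)^2)$ word-RAM operations coming from the sketch comparisons, and the remaining $O(n^\rho/p_1)$ word-RAM operations from DKT are dominated by the same quantity.
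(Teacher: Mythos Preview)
Your proposal is correct and follows essentially the same route as the paper: combine Theorem~\ref{thm:lsh_dkt_ram} with the Li--K{\"o}nig 1-bit sketch of length $\Theta(\log n/(p_1-p_2)^2)$, replace each candidate distance computation by a word-parallel sketch comparison, and union bound the sketch failures over the $n$ data points. Your threshold $(1+(p_1+p_2)/2)/2$ is exactly the paper's $\lambda = (1+p_2)/2 + (p_1-p_2)/4$, and your Chernoff step is the paper's Hoeffding step (Lemma~\ref{lem:sketching}); in fact you supply the bookkeeping inequality $(1-p_2)^2 p_1 \geq (p_1-p_2)^2$ that the paper leaves implicit when absorbing the $O(\log_{1/p_2}^2(n)/p_1)$ term into $O(\log^2(n)/(p_1-p_2)^2)$.
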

\section{Preliminaries}\label{sec:preliminaries}
\subparagraph*{Problem and dynamization.}
We begin by defining the version of the approximate near neighbor problem that the frameworks presented in this paper will be solving:
\begin{definition}\label{def:ann}
Let $P \subseteq X$ be a collection of $|P| = n$ points in a distance space $(X, \dist)$.
A solution to the $(r_1, r_2)$-near neighbor problem is a data structure that supports the following query operation:
Given a query point $q \in X$, if there exists a point $x \in P$ with $\dist(q, x) \leq r_1$, 
then, with probability at least $1/2$, return a point $x' \in P$ such that $\dist(q, x') < r_2$.  
\end{definition}
We aim for solutions with a failure probability that is upper bounded by $1/2$.
The standard trick of using $\eta$ independent repetitions of the data structure allows us to reduce the probability of failure to $1/2^\eta$.
For the sake of simplicity we restrict our attention to static solutions, meaning that we do not concern ourselves with the complexity of updates to the underlying set $P$, although it is simple to modify the static solutions presented in this paper to dynamic solutions where the update complexity essentially matches the query complexity~\cite{overmars1981, har-peled2012} 

\subparagraph*{LSH powering.}
The Indyk-Motwani framework and the Andoni-Indyk framework will make use of the following standard powering technique described in the introduction as ``overlaying partitions''.
Let $k \geq 1$ be an integer and let $\LSH$ denote a locality-sensitive family of hash functions as in Definition \ref{def:lsh}. 
We will use the notation $\LSH^k$ to denote the distribution over functions $h' \colon X \to R^k$ where
\begin{equation}
	h'(x) = (h_1(x), \dots, h_k(x))
\end{equation}
and $h_1, \dots, h_k$ are sampled independently at random from $\LSH$.
It is easy to see that $\LSH^k$ is $(r_1, r_2, p_1^k, p_2^k)$-sensitive.
To deal with some special cases we define $\LSH^0$ to be the family consisting of a single constant function.

\subparagraph*{Model of computation.}
We will work in the standard word-RAM model of computation~\cite{hagerup1998} with a word length of $\Theta(\log n)$ bits where $n$ denotes the size of the collection $P$ to be searched in the $(r_1, r_2)$-near neighbor problem.
During the preprocessing stage of our solutions we will assume access to a source of randomness that allows us to sample independently from a family $\LSH$ and to seed pairwise independent hash functions~\cite{carter1979, wegman1981}. 
The latter can easily be accomplished by augmenting the model with an instruction that generates a uniformly random word in constant time and using that to seed the tables of a Zobrist hash function~\cite{zobrist1970new}.
\section{Frameworks}\label{sec:frameworks}
\subparagraph*{Overview.}
We will describe frameworks that take as input a $(r_1, r_2, p_1, p_2)$-sensitive family $\LSH$ and a collection $P$ of $n$ points and constructs a data structure that solves the $(r_1, r_2)$-near neighbor problem.
The frameworks described in this paper all use the same high-level technique of constructing $L$ hash functions $g_{1},\dots,g_{L}$ that are used to partition space such that a pair of points $x, y$ with $\dist(x, y) \leq r_1$ will end up in the same part of one of the $L$ partitions with probability at least $1/2$. 
That is, for $x, y$ with $\dist(x, y) \leq r_1$ we have that $\Pr[\exists l \in [L] \colon g_{l}(x) = g_{l}(y)] \geq 1/2$ where $[L]$ is used to denote the set $\{1,2,\dots,L\}$.
At the same time we ensure that the expected number of collisions between pairs of points $x,y$ with $\dist(x, y) \geq r_2$ is at most one in each partition.

\subparagraph*{Preprocessing and queries.}
During the preprocessing phase, for each of the $L$ hash functions $g_{1}, \dots,g_{L}$ we compute the partition of the collection of points $P$ induced by $g_{l}$ and store it in a hash table in the form of key-value pairs $(z, \{ x \in P \mid g_{l}(x) = z \})$.
To reduce space usage we store only a single copy of the collection $P$ and store references to $P$ in our $L$ hash tables.
To guarantee lookups in constant time we can use the perfect hashing scheme by Fredman et al.~\cite{fredman1984storing} to construct our hash tables.
We will assume that hash values $z = g_{l}(x)$ fit into $O(1)$ words. 
If this is not the case we can use universal hashing~\cite{carter1977} to operate on fingerprints of the hash values.

We perform a query for a point $q$ as follows: for $l = 1, \dots, L$ we compute $g_{l}(q)$, 
retrieve the set of points $\{ x \in P \mid g_{l}(x) = g_{l}(q) \}$, and compute the distance between $q$ and each point in the set.
If we encounter a point $x'$ with $\dist(q, x') < r_2$ then we return $x'$ and terminate.
If after querying the $L$ sets no such point is encountered we return a special symbol $\varnothing$ and terminate.

We will proceed by describing and analyzing the solutions to the $(r_1, r_2)$-near neighbor problem for different approaches to sampling, storing, and computing the $L$ hash functions $g_{1}, \dots, g_{L}$, resulting in the different frameworks as mentioned in the introduction.
\subsection{Indyk-Motwani}\label{sec:im}
To solve the $(r_1, r_2)$-near neighbor problem using the Indyk-Motwani framework we sample $L$ hash functions $g_{1}, \dots,g_{L}$ independently at random from the family $\LSH^k$ where we set $k = \lceil \log(n) / \log(1/p_2) \rceil$ and $L = \lceil (\ln 2)/p_1^k \rceil$.
Correctness of the data structure follows from the observation that the probability that a pair of points $x, y$ with $\dist(x,y) \leq r_1$ does not collide under a randomly sampled $g_l \sim \LSH^k$ is at most $1 - p_1^k$. 
We can therefore upper bound the probability that a near pair of points does not collide under any of the hash functions by $(1-p_1^k)^L \leq \exp(-p_1^k L) \leq 1/2$ using a standard bound stated as Lemma~\ref{lem:exp_upper} in Appendix~\ref{app:inequalities}.

In the worst case, the query operation computes $L$ hash functions from $\LSH^k$ corresponding to $Lk$ hash functions from $\LSH$. 
For a query point $q$ the expected number of points $x' \in P$ with $\dist(q, x') \geq r_2$ that collide with $q$ under a randomly sampled $g_l \sim \LSH^k$ is at most $np_2^k \leq np_2^{\log(n) / \log (1/p_2)} = 1$.
It follows from linearity of expectation that the total expected number of distance computations during a query is at most $L$.
The result is summarized in Theorem \ref{thm:lsh_im_exact} from which the simplified Theorem \ref{thm:lsh_im_simple} follows.
\begin{theorem}[Indyk-Motwani {\cite{indyk1998, har-peled2012}}]\label{thm:lsh_im_exact}
Given a $(r_1, r_2, p_1, p_2)$-sensitive family $\LSH$ we can construct a data structure that solves the $(r_1, r_2)$-near neighbor problem such that for
$k = \lceil \log(n) / \log(1/p_2) \rceil$ and $L = \lceil (\ln 2)/p_1^k \rceil$ the data structure has the following properties:
\begin{itemize}
\item The query operation uses at most $Lk$ evaluations of hash functions from $\LSH$, 
	expected $L$ distance computations, and $O(Lk)$ other word-RAM operations.
\item The data structure uses $O(nL)$ words of space in addition to the space required to store the data and $Lk$ hash functions from $\LSH$.
\end{itemize}
\end{theorem}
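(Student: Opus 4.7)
The plan is to make rigorous the sketch in the paragraph preceding the theorem by separately verifying the three quantitative claims — correctness, query cost, and space — bundled into the statement. The construction is the one already described in Section \ref{sec:frameworks}: sample $g_1, \ldots, g_L$ independently from $\LSH^k$ with $k = \lceil \log(n)/\log(1/p_2)\rceil$ and $L = \lceil (\ln 2)/p_1^k \rceil$; then, for each $l$, hash $P$ according to $g_l$ into a hash table (using the perfect hashing scheme of Fredman et al.~\cite{fredman1984storing}) storing key-value pairs $(z, \{x \in P : g_l(x) = z\})$, with references to a single shared copy of $P$. At query time we sweep $l = 1, \ldots, L$, compute $g_l(q)$, probe the $l$-th table, and scan the returned bucket for a point at distance strictly less than $r_2$, returning the first success (or $\varnothing$ if none is found).

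For correctness, fix any $x \in P$ with $\dist(q,x) \leq r_1$. Since $\LSH^k$ is $(r_1, r_2, p_1^k, p_2^k)$-sensitive, $\Pr[g_l(q) = g_l(x)] \geq p_1^k$ for each $l$, and independence of the $g_l$ then gives
\[
\Pr[\forall l \in [L] : g_l(q) \neq g_l(x)] \leq (1 - p_1^k)^L \leq \exp(-p_1^k L) \leq \exp(-\ln 2) = \tfrac{1}{2},
\]
using $1 - t \leq e^{-t}$ and the choice of $L$. Thus with probability at least $1/2$ some table reveals $x$ (or the algorithm already returned an equally good point earlier), yielding a valid answer.

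For the query cost, each evaluation of a $g_l \in \LSH^k$ costs $k$ evaluations from $\LSH$ plus $O(k)$ word-RAM operations to pack the $k$-tuple and probe the perfect hash table, accounting for $Lk$ hash evaluations and $O(Lk)$ other operations overall. For the expected distance-computation count, any far point $x' \in P$ with $\dist(q,x') \geq r_2$ collides with $q$ under a given $g_l$ with probability at most $p_2^k \leq p_2^{\log(n)/\log(1/p_2)} = 1/n$ by the choice of $k$, so the expected number of such spurious collisions in table $l$ is at most $n \cdot (1/n) = 1$; summing over $l$ by linearity of expectation yields the claimed expected bound of $L$ distance computations. Space is immediate: one reference array of size $n$ per table gives $O(nL)$ extra words on top of the stored copy of $P$ and the $Lk$ sampled hash functions.

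The proof is mechanical rather than idea-driven, so there is no real obstacle. The one thing that deserves attention is the rounding in the choice of parameters: the bound $p_2^k \leq 1/n$ uses the ceiling in the definition of $k$, and $p_1^k L \geq \ln 2$ uses the ceiling in the definition of $L$, so both ceilings are essential; everything else is a direct application of $1 - t \leq e^{-t}$, linearity of expectation, and the constant-time lookup guarantee of the perfect hashing scheme.
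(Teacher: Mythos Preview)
Your proof is correct and matches the paper's own argument essentially line for line: the same independence-based bound $(1-p_1^k)^L \le \exp(-p_1^k L) \le 1/2$ for correctness, the same $np_2^k \le 1$ bound per table combined with linearity of expectation for the expected number of distance computations, and the same accounting for hash evaluations, word-RAM overhead, and space. There is nothing to add.
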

Theorem \ref{thm:lsh_im_exact} gives a bound on the expected number of distance computations while the simplified version stated in Theorem \ref{thm:lsh_im_simple} uses Markov's inequality and independent repetitions to remove the expectation from the bound by treating an excessive number of distance computations as a failure.
\subsection{Andoni-Indyk}\label{sec:ai}
In 2006 Andoni and Indyk, as part of a paper on the substring near neighbor problem, introduced an improvement to the Indyk-Motwani framework that reduces the number of locality-sensitive hash functions~\cite{andoni2006efficient}.
Their improvement comes from the use of a technique that we will refer to as tensoring: setting the hash functions $g_1, \dots, g_L$ to be all $t$-tuples from a collection of $m$ functions sampled from $\LSH^{k/t}$ where $m \ll L$.
The analysis in~\cite{andoni2006efficient} shows that by setting $m = n^{\rho/t}$ and repeating the entire scheme $t!$ times, the total number of hash functions can be reduced to $O(\exp(\sqrt{\rho \log n \log \log n}))$ when setting $t = \sqrt{\frac{\rho \log n}{\log \log n}}$.
This analysis ignores integer constraints on $t$, $k$, and $m$, and implicitly place restrictions on $p_1$ and $p_2$ in relation to $n$ (e.g.\ $0 < p_2 < p_1 < 1$ are constant).
We will introduce a slightly different scheme that takes into account integer constraints and analyze it without restrictions on the properties of $\LSH$.

Assume that we are given a $(r_1, r_2, p_1, p_2)$-sensitive family $\LSH$.
Let $\eta, t, k_1, k_2, m_1, m_2$ be non-negative integer parameters.
Each of the $L$ hash functions $g_1, \dots, g_L$ will be formed by concatenating one hash function from each of $t$ collections of $m_1$ hash functions from $\LSH^{k_1}$ and concatenating a last hash function from a collection of $m_2$ hash functions from $\LSH^{k_2}$.
We take all $m_1^t m_2$ hash functions of the above form and repeat $\eta$ times for a total of $L = \eta m_1^t m_2$ hash functions constructed from a total of $H = \eta(m_1 k_1 t + m_2 k_2)$ hash functions from $\LSH$.
In Appendix \ref{app:ai} we set parameters, leaving $t$ variable, and provide an analysis of this scheme, showing that $L$ matches the Indyk-Motwani framework bound of $O(1/p_1^k)$ up to a constant where $k = \lceil \log(n) / \log(1/p_2) \rceil$ as in Theorem \ref{thm:lsh_im_exact}.

\subparagraph*{Setting $t$.}
It remains to show how to set $t$ to obtain a good bound on the number of hash functions $H$.
Note that in practice we can simply set $t = \argmin_t H$ by trying $t = 1,\dots,k$. 
If we ignore integer constraints and place certain restrictions of $\LSH$ as in the original tensoring scheme by Andoni and Indyk we want to set $t$ to minimize the expression $t^t n^{\rho/t}$. 
This minimum is obtained when setting $t$ such that $t^2 \log t = \rho \log n$.
We therefore cannot do much better than setting $t = \sqrt{\rho \log(n) / \log \log n}$ which gives the bound $H = O(\exp(\sqrt{\rho \log(n) \log \log n}))$ as shown in \cite{andoni2006efficient}. 
To allow for easy comparison with the Indyk-Motwani framework without placing restrictions on $\LSH$ we set $t = \lceil \sqrt{k} \rceil$, resulting in Theorem \ref{thm:tensoring}. 
\begin{theorem}\label{thm:tensoring}
Given a $(r_1, r_2, p_1, p_2)$-sensitive family $\LSH$ we can construct a data structure that solves the $(r_1, r_2)$-near neighbor problem such that for $k = \lceil \log(n) / \log(1/p_2) \rceil$, $H = k(\sqrt{k}/p_1)^{\sqrt{k}}$, and $L = \lceil 1/p_1^k \rceil$ the data structure has the following properties:
\begin{itemize}
	\item The query operation uses $O(H)$ evaluations of functions from $\LSH$, $O(L)$ distance computations, and $O(L + H)$ other word-RAM operations.
	\item The data structure uses $O(nL)$ words of space in addition to the space required to store the data and $O(H)$ hash functions from $\LSH$.
\end{itemize}
\end{theorem}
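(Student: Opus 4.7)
The plan is to instantiate the tensoring scheme defined above with $t = \lceil \sqrt{k} \rceil$, where $k = \lceil \log n / \log(1/p_2) \rceil$, to split $k = t k_1 + k_2$ with $0 \le k_2 < t$ (so that $k_1, k_2 \le \lceil \sqrt{k} \rceil$), and then to choose $m_1, m_2, \eta$ so that both the target $L = \lceil 1/p_1^k \rceil$ and a success probability of at least $1/2$ are achieved subject to the integer constraints; the detailed arithmetic would be deferred to the appendix.

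For correctness, the core structural observation is that for any near pair $x, y$ with $\dist(x,y) \le r_1$, the event ``some tensored hash function collides $x$ and $y$'' equals $\bigcap_{j=1}^{t+1} A_j$, where $A_j$ is the event ``some function in the $j$th collection collides $x$ and $y$'': the tuple $(i_1,\dots,i_{t+1})$ witnessing a collision in the product can be chosen coordinate by coordinate. Independence across the $t+1$ collections then factorizes the per-repetition success probability as $\prod_{j=1}^{t+1} \bigl(1 - (1-p_1^{k_j})^{m_j}\bigr)$, which a suitable choice of $m_j$'s together with a small number $\eta$ of independent outer repetitions boosts to at least $1/2$. For the query cost, I would mirror the Indyk-Motwani analysis: any far point $x'$ with $\dist(q,x') \ge r_2$ collides with $q$ in a fixed tensored function with probability at most $p_2^k \le 1/n$, so by linearity the expected number of distance computations per query is at most $L$. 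The $H$ hash evaluations at $q$ suffice to produce all $\eta(m_1 t + m_2)$ inner $\LSH^{k_j}$-values, and from these the $L$ concatenated values $g_l(q)$ can be assembled by iterating over index tuples in $O(L+H)$ additional word-RAM operations, with one hash table per tensored function giving the $O(nL)$ space bound.

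Substituting $t = \lceil \sqrt k \rceil$ into $H = \eta(m_1 k_1 t + m_2 k_2)$ with the chosen parameters yields the claimed bound $H = k(\sqrt k/p_1)^{\sqrt k}$, which is a deliberately loose upper bound designed to absorb the ceilings and the slack introduced when forcing a product of $t+1$ collision-probability terms above $1/2$. The main obstacle I anticipate is precisely this bookkeeping: choosing $m_1, m_2$ so that $m_1^t m_2$ does not overshoot $\lceil 1/p_1^k \rceil$ while keeping $\prod_j \bigl(1 - (1-p_1^{k_j})^{m_j}\bigr)$ bounded below by a constant independent of $t$, and then verifying that the resulting $H = \eta(m_1 k_1 t + m_2 k_2)$ fits inside $k(\sqrt k/p_1)^{\sqrt k}$ for all admissible values of $p_1, p_2, n$. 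Once these parameters are pinned down, the probability decomposition and the operation counting above become routine.
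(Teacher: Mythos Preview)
Your overall plan matches the paper's: instantiate the tensoring scheme with $t=\lceil\sqrt{k}\rceil$, factorize the per-repetition success probability over the $t+1$ independent collections, bound far collisions by $p_2^k\le 1/n$, and count operations. The structural observations are all correct.

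There is, however, one concrete point where your stated target is not achievable and differs from what the paper actually does. You write that you want to choose $m_1,m_2$ so that $m_1^t m_2$ does not overshoot $\lceil 1/p_1^k\rceil$ \emph{while} keeping $\varphi=\prod_j\bigl(1-(1-p_1^{k_j})^{m_j}\bigr)$ bounded below by a constant independent of~$t$. These two constraints are incompatible: if $m_1=\Theta(1/p_1^{k_1})$ then each factor is roughly $1-1/e$, so $\varphi\approx(1-1/e)^{t}$ and $\eta\approx(e/(e-1))^{t}$, which inflates $L=\eta m_1^t m_2$ by an exponential-in-$t$ factor. The paper's resolution is to set $m_1=\lceil 1/(t\,p_1^{k_1})\rceil$, \emph{deliberately} letting $\varphi$ decay like $(1/t)^t$; the point is that then $\eta\cdot m_1^t \approx t^t\cdot (1/t)^t/p_1^{tk_1}=1/p_1^{tk_1}$, so the $t^t$ factors cancel in $L$ but migrate into $H=\eta(m_1 k_1 t+m_2 k_2)$, where with $t=\lceil\sqrt{k}\rceil$ they become the $(\sqrt{k})^{\sqrt{k}}$ part of the stated bound. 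Once you make this adjustment to your parameter choice, the remainder of your outline goes through exactly as you describe.
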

Thus, compared to the Indyk-Motwani framework we have gone from using $O(k(1/p_1)^k)$ locality-sensitive hash functions to $O(k(\sqrt{k}/p_1)^{\sqrt{k}})$ locality-sensitive hash functions. 
Figure~\ref{fig:comparison} shows the actual number of hash functions of the revised version of the Andoni-Indyk scheme as analyzed in Appendix \ref{app:ai} when $t$ is set to minimize $H$.
\subsection{Dahlgaard-Knudsen-Thorup}\label{sec:dkt}
In a recent paper Dahlgaard et al.~\cite{dahlgaard2017fast} introduce a different technique for reducing the number of locality-sensitive hash functions.
The idea is to construct each hash value $g_{l}(x)$ by sampling and concatenating $k$ hash values from a collection of $km$ pre-computed hash functions from~$\LSH$.
Dahlgaard et al.\ applied this technique to provide a fast solution the the approximate near neighbor problem for sets under Jaccard similarity.
In this paper we use the same technique to derive a general framework solution that works with every family of locality-sensitive hash functions, reducing the number of locality-sensitive hash functions compard to the Indyk-Motwani and Andoni-Indyk frameworks.

Let $[n]$ denote the set of integers $\{1,2,\dots,n\}$.
For $i \in [k]$ and $j \in [m]$ let $h_{i,j} \sim \LSH$ denote a hash function in our collection.
To sample from the collection we use $k$ pairwise independent hash functions~\cite{wegman1981} of the form $f_i \colon [L] \to [m]$ and set
\begin{equation*}
g_{l}(x) = (h_{1,f_{1}(l)}(x), \dots, h_{k,f_{k}(l)}(x)).
\end{equation*}
To show correctness of this scheme we will use make use of an elementary one-sided version of Chebyshev's inequality stating that for a random variable $Z$ with mean $\mu > 0$ and variance $\sigma^2 < \infty$ we have that $\Pr[Z \leq 0] \leq \sigma^2 / (\mu^2 + \sigma^2)$. 
For completeness we have included the proof of this inequality in Lemma~\ref{lem:cantelli} in Appendix~\ref{app:inequalities}.
We will apply this inequality to lower bound the probability that there are no collisions between close pairs of points.
For two points $x$ and $y$ let $Z_l = \1\{g_{l}(x) = g_{l}(y)\}$ so that $Z = \sum_{l = 1}^L Z_l$ denotes the sum of collisions under the $L$ hash functions. 
To apply the inequality we need to derive an expression for the expectation and the variance of the random variable $Z$.
Let $p = \Pr_{h \sim \LSH}[h(x) = h(y)]$ then by linearity of expectation we have that $\mu = \E[Z] = L p^k$.
To bound $\sigma^2 = \E[Z^2] - \mu^2$ we proceed by bounding $\E[Z^2]$ where we note that $Z_l = \Pi_{i = 1}^{k} Y_{l, i}$ for $Y_{l, i} = 1\{h_{i,f_{i}(l)}(x) = h_{i,f_{i}(l)}(x)\}$ and make use of the independence between $Y_{l,i}$ and $Y_{l', i'}$ for $i \neq i'$.
\begin{align*}
	\E[Z^2] &= \sum_{\substack{l, l' \in [L] \\ l \neq l'}} \E[Z_l Z_{l'}] + \sum_{l = 1}^L \E[Z_l] \\ 
			&= (L^2 - L) \E[Z_l Z_{l'}] + \mu \\
			&\leq L^2 \E\left[ \Pi_{i = 1}^{k} Y_{l, i} Y_{l', i} \right] + \mu \\ 
			&= L^2 \left( \E[Y_{l, i} Y_{l', i}] \right)^k + \mu. 
\end{align*}
We have that $\E[Y_{l, i} Y_{l', i}] = \Pr[f_i(l) = f_i(l')] p + \Pr[f_i(l) \neq f_i(l')] p^2 = (1/m)p + (1-1/m)p^2$ which follows from the pairwise independence of $f_i$. 
Let $\varepsilon > 0$ and set $m = \lceil \frac{1-p_1}{p_1} \frac{k}{\ln(1+\varepsilon)} \rceil$ then for $p \geq p_1$ we have that $\left( \E[Y_{l, i} Y_{l', i}] \right)^k \leq (1 + \varepsilon)p^{2k}$.
This allows us to bound the variance of $Z$ by $\sigma^2 \leq \varepsilon \mu^2 + \mu$ resulting in the following lower bound on the probability of collision between similar points.
\begin{lemma} \label{lem:dkt_success}
For $\varepsilon > 0$ let $m \geq \lceil \frac{1-p_1}{p_1} \frac{k}{\ln(1+\varepsilon)} \rceil$, then for every pair of points $x, y$ with $\dist(x, y) \leq r_1$ we have that 
\begin{equation}
	\Pr[\exists l \in [L] \colon g_{l}(x) = g_{l}(y)] \geq \frac{1 + \varepsilon \mu}{1 + (1 + \varepsilon)\mu}.
\end{equation}
\end{lemma}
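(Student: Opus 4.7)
The proof plan follows the thread the excerpt has already been laying down: almost all of the work is in bounding $\Var[Z]$, and the final step is an application of the one-sided Chebyshev (Cantelli) inequality. What remains to be verified is (i) the claim that with $m$ chosen as in the hypothesis we have $\bigl(\E[Y_{l,i}Y_{l',i}]\bigr)^k \leq (1+\varepsilon)p^{2k}$, and (ii) the algebra converting the Cantelli bound into the stated form.

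For step (i), I would start from $\E[Y_{l,i}Y_{l',i}] = (1/m)p + (1-1/m)p^2 = p^2\bigl(1 + (1-p)/(pm)\bigr)$, which the excerpt already obtains from pairwise independence of $f_i$. Since $\dist(x,y) \leq r_1$, the definition of $(r_1, r_2, p_1, p_2)$-sensitivity gives $p \geq p_1$, and $(1-p)/p$ is decreasing in $p$, so $(1-p)/(pm) \leq (1-p_1)/(p_1 m)$. Plugging in $m \geq \lceil \frac{1-p_1}{p_1}\cdot\frac{k}{\ln(1+\varepsilon)}\rceil$ gives $(1-p)/(pm) \leq \ln(1+\varepsilon)/k$. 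Raising to the $k$-th power and using $1+x \leq e^x$ yields $\bigl(1 + (1-p)/(pm)\bigr)^k \leq e^{\ln(1+\varepsilon)} = 1+\varepsilon$, which is the claim.

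For step (ii), substituting this into the inequality $\E[Z^2] \leq L^2 \bigl(\E[Y_{l,i}Y_{l',i}]\bigr)^k + \mu$ gives $\E[Z^2] \leq (1+\varepsilon)L^2 p^{2k} + \mu = (1+\varepsilon)\mu^2 + \mu$, so $\sigma^2 = \E[Z^2] - \mu^2 \leq \varepsilon\mu^2 + \mu$. Since $\Pr[\exists l\colon g_l(x)=g_l(y)] = \Pr[Z > 0] = 1 - \Pr[Z \leq 0]$, applying Cantelli (Lemma~\ref{lem:cantelli}) gives $\Pr[Z > 0] \geq \mu^2/(\mu^2 + \sigma^2)$, and substituting the variance bound yields a lower bound of the form $\mu^2/\bigl((1+\varepsilon)\mu^2 + \mu\bigr)$. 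A short rearrangement then produces the stated expression $\frac{1+\varepsilon\mu}{1+(1+\varepsilon)\mu}$.

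The only nonroutine step is (i): making sure the inequalities $1+x \leq e^x$ and the monotonicity of $(1-p)/p$ are used in the right direction so that the bound on $(E[Y_{l,i}Y_{l',i}])^k$ is uniform in $p \geq p_1$. The ceiling in the choice of $m$ handles integer constraints but does not change the analysis. Steps (ii) is pure substitution, and the probability lower bound then follows directly from the Cantelli inequality, which is the only probabilistic tool invoked.
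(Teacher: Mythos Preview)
Your approach is exactly the paper's: the text preceding the lemma already carries out steps (i) and (ii) as you describe, and Cantelli's inequality (Lemma~\ref{lem:cantelli}) is the intended final tool. Your verification of (i) via $(1-p)/p$ being decreasing and $1+x \le e^x$ is precisely what is needed.

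One small caution on the ``short rearrangement'' in (ii): substituting $\sigma^2 \le \varepsilon\mu^2 + \mu$ into $\mu^2/(\mu^2+\sigma^2)$ gives
\[
\Pr[Z>0] \;\ge\; \frac{\mu^2}{(1+\varepsilon)\mu^2+\mu} \;=\; \frac{\mu}{1+(1+\varepsilon)\mu},
\]
which is \emph{not} algebraically equal to $\dfrac{1+\varepsilon\mu}{1+(1+\varepsilon)\mu}$; in fact the latter is exactly $1$ minus the former. The downstream use of the lemma (setting $\varepsilon=1/4$, $L=\lceil 2\ln 2 / p_1^k\rceil$ to get failure probability at most $1/2$) is consistent with the bound $\Pr[Z>0]\ge \mu/(1+(1+\varepsilon)\mu)$, so the displayed right-hand side in the lemma appears to be the complementary quantity. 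Your derivation is correct; just do not expect that last rewrite to go through literally.
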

By setting $\varepsilon = 1/4$ and $L = \lceil (2 \ln(2))/p_1^k \rceil$ we obtain an upper bound on the failure probability of $1/2$.
Setting the size of each of the $k$ collections of pre-computed hash values to $m = \lceil 5k/p_1 \rceil$ is sufficient to yield the following solution to the $(r_1, r_2)$-near neighbor problem where provide exact bounds on the number of lookups $L$ and hash functions $H$:
\begin{theorem}[Dahlgaard-Knudsen-Thorup {\cite{dahlgaard2017fast}}]\label{thm:lsh_dkt_exact}
Given a $(r_1, r_2, p_1, p_2)$-sensitive family $\LSH$ we can construct a data structure that solves the $(r_1, r_2)$-near neighbor problem such that for $k = \lceil \log(n) / \log(1/p_2) \rceil$, $H = k \lceil 5k / p_1 \rceil$, and $L = \lceil (2 \ln(2))/p_1^k \rceil$ the data structure has the following properties:
\begin{itemize}
\item The query operation uses at most $H$ evaluations of hash functions from $\LSH$, 
	expected $L$ distance computations, and $O(Lk)$ other word-RAM operations.
\item The data structure uses $O(nL)$ words of space in addition to the space required to store the data and $H$ hash functions from $\LSH$.
\end{itemize}
\end{theorem}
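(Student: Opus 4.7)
The plan is to assemble Theorem~\ref{thm:lsh_dkt_exact} by instantiating Lemma~\ref{lem:dkt_success} with $\varepsilon = 1/4$ and combining it with the standard collision-counting analysis used for the Indyk--Motwani framework. Since most of the ingredients have already been developed in the exposition leading up to the theorem, the proof is essentially a bookkeeping exercise that pins down explicit constants.

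First, I would fix $\varepsilon = 1/4$. Because $\ln(5/4) > 1/5$ and $1 - p_1 \leq 1$, the requirement $m \geq \lceil (1-p_1)k/(p_1 \ln(5/4)) \rceil$ from Lemma~\ref{lem:dkt_success} is satisfied by $m = \lceil 5k/p_1 \rceil$, so the total number of precomputed hash evaluations is $H = km = k\lceil 5k/p_1 \rceil$. For correctness, any near pair $(x,y)$ has per-function collision probability $p \geq p_1$, so with $L = \lceil 2\ln(2)/p_1^k \rceil$ we have $\mu = L p^k \geq 2\ln 2$. Substituting this value of $\mu$ and $\varepsilon = 1/4$ into the bound of Lemma~\ref{lem:dkt_success} shows that the probability of at least one collision among the $L$ partitions is at least $1/2$, matching the success requirement of Definition~\ref{def:ann}.

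For the runtime bounds, I would observe that, for a far point $x$ with $\dist(q,x) \geq r_2$ and a fixed partition $l$, conditional on the pairwise-independent $f_i$'s the $k$ sub-hash functions $h_{1,f_1(l)},\ldots,h_{k,f_k(l)}$ are independent samples from $\LSH$, so $\Pr[g_l(q) = g_l(x)] \leq p_2^k \leq 1/n$ by the choice $k = \lceil \log(n)/\log(1/p_2) \rceil$. Linearity of expectation over the at most $n$ far points of $P$ and the $L$ partitions yields expected at most $L$ distance computations. Evaluating each $g_l(q)$ given the precomputed table of $km$ hash values costs $k$ computations of $f_i$ plus $k$ table lookups, i.e.\ $O(k)$ word-RAM operations, for $O(Lk)$ overhead in total. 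Space for the $L$ perfect-hashed tables is $O(nL)$ words of references into $P$, on top of the $H$ stored $\LSH$ functions and the $O(k)$ seeds for $f_1,\ldots,f_k$.

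The only delicate point is the correctness step: Lemma~\ref{lem:dkt_success}'s bound rests on a one-sided Chebyshev inequality and does not tend to $1$ as $\mu \to \infty$, so $\varepsilon$ must be taken small enough that the bound exceeds $1/2$ for a value of $\mu$ that only inflates $L$ by a small constant factor over $1/p_1^k$. The triple $(\varepsilon, L, m) = (1/4,\; \lceil 2\ln(2)/p_1^k \rceil,\; \lceil 5k/p_1 \rceil)$ is exactly such a balance, and once it is fixed the theorem's stated bounds on hash evaluations, distance computations, word-RAM overhead, and space all follow immediately.
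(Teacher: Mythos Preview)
Your proposal is correct and follows essentially the same route as the paper: instantiate Lemma~\ref{lem:dkt_success} with $\varepsilon = 1/4$ (using $\ln(5/4) > 1/5$ to justify $m = \lceil 5k/p_1\rceil$), set $L = \lceil 2\ln(2)/p_1^k\rceil$ so that $\mu \geq 2\ln 2 > 4/3$ forces the Cantelli bound past $1/2$, and then reuse the Indyk--Motwani collision-counting argument for the expected number of distance computations together with the obvious $O(k)$-per-partition word-RAM accounting. There is nothing to add beyond what you have written.
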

Compared to the Indyk-Motwani framework we have reduced the number of locality-sensitive hash functions $H$ from $O(k (1/p_1)^k)$ to $O(k^2 / p_1)$ at the cost of using twice as many lookups. 
To reduce the number of lookups further we can decrease $\varepsilon$ and perform several independent repetitions.
This comes at the cost of an increase in the number of hash functions $H$.
\section{Reducing the word-RAM complexity} \label{section:word-RAM}
One drawback of the DKT framework is that each hash value $g_{l}(x)$ still takes $O(k)$ word-RAM operations to compute, even after the underlying locality-sensitive hash functions are known. 
This results in a bound on the total number of additional word-RAM operations of $O(Lk)$.
We show how to combine the DKT universal hashing technique with the AI tensoring technique to ensure that the running time is dominated by $O(L)$ distance computations and $O(H)$ hash function evaluations. 
The idea is to use the DKT scheme to construct two collections of respectively $L_1$ and $L_2$ hash functions, and then to use the AI tensoring approach to form $g_1, \dots, g_L$ as the $L = L_1 \times L_2$ combinations of functions from the two collections.
The number of lookups can be reduced by applying tensoring several times in independent repetitions, but for the sake of simplicity we use a single repetition.
For the usual setting of $k = \lceil \log(n) / \log(1/p_2) \rceil$ let $k_1 =  \lceil k/2 \rceil$ and $k_2 = \lfloor k/2 \rfloor$.
Set $L_1 = \lceil 6 (1/p_1)^{k_1} \rceil$ and $L_2 = \lceil 6 (1/p_1)^{k_2} \rceil$.
According to Lemma \ref{lem:dkt_success} if we set $\varepsilon = 1/6$ the success probability of each collection is at least $3/4$ and by a union bound the probability that either collection fails to contain a colliding hash function is at most $1/2$.
This concludes the proof of our main Theorem \ref{thm:lsh_dkt_ram}.
\subsection{Sketching}\label{sec:sketching}
The theorems of the previous section made no assumptions on the word-RAM complexity of distance computations and instead stated the number of distance computations as part of the query complexity. 
We can use a $(r_1, r_2, p_1, p_2)$-sensitive family $\LSH$ to create sketches that allows us to efficiently approximate the distance between pairs of points, provided that the gap between $p_1$ and $p_2$ is sufficiently large. 
In this section we will re-state the results of Theorem~\ref{thm:lsh_dkt_ram} when applying the family $\LSH$ to create sketches using the 1-bit sketching scheme of Li and König~\cite{li2011theory}.
Let $b$ be a positive integer denoting the length of the sketches in bits.
The advantage of this scheme is that we can use word level parallelism to evaluate a sketch of $b$ bits in time $O(b/\log n)$ in our word-RAM model with word length $\Theta(\log n)$. 

For $i = 1, \dots, b$ let $h_i \colon X \to R$ denote a randomly sampled locality-sensitive hash function from $\LSH$ and let $f_i \colon R \to \{0,1\}$ denote a randomly sampled universal hash function.
We let $s(x) \in \{0,1\}^b$ denote the sketch of a point $x \in X$ where we set the $i$th bit of the sketch $s(x)_i = f_i(h(x))$.
For two points $x, y \in X$ the probability that they agree on the $i$th bit is $1$ if the points collide under $h_i$ and $1/2$ otherwise. 
\begin{equation*}
	\Pr[s(x)_i = s(y)_i] = \Pr[h_i(x) = h_i(y)] + (1 - \Pr[h_i(x) = h_i(y)])/2 = (1 + \Pr[h_i(x) = h_i(y)])/2.
\end{equation*}
We will apply these sketches during our query procedure instead of direct distance computations when searching through the points in the $L$ buckets, comparing them to our query point $q$.
Let $\lambda \in (0,1)$ be a parameter that will determine whether we report a point or not.
For sketches of length $b$ we will return a point $x$ if $\norm{s(q) - s(x)}_1 > \lambda b$.
An application of Hoeffiding's inequality gives us the following properties of the sketch:
\begin{lemma}\label{lem:sketching}
	Let $\LSH$ be a $(r_1, r_2, p_1, p_2)$-sensitive family and let $\lambda = (1+p_2)/2 + (p_1-p_2)/4$, 
	then for sketches of length $b \geq 1$ and for every pair points $x, y \in X$:  
	\begin{itemize}
		\item If $\dist(x ,y) \leq r_1$ then $\Pr[\norm{s(x) - s(y)}_1 \leq \lambda b] \leq e^{b(p_1 - p_2)^2 / 8 }$.
		\item If $\dist(x ,y) \geq r_2$ then $\Pr[\norm{s(x) - s(y)}_1 > \lambda b] \leq e^{b(p_1 - p_2)^2 / 8 }$.
	\end{itemize}
\end{lemma}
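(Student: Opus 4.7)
The plan is a straightforward Hoeffding concentration argument on a sum of $b$ independent Bernoulli variables, one per sketch coordinate. I would first set $Z_i = \1\{s(x)_i = s(y)_i\}$ for $i = 1, \dots, b$ and note that because the pairs $(h_i, f_i)$ are drawn independently across $i$, the $Z_i$ are mutually independent. The identity derived immediately before the lemma gives $\E[Z_i] = (1 + \Pr[h_i(x) = h_i(y)])/2$, so the locality-sensitivity of $\LSH$ yields $\E[Z_i] \geq (1 + p_1)/2$ when $\dist(x, y) \leq r_1$ and $\E[Z_i] \leq (1 + p_2)/2$ when $\dist(x, y) \geq r_2$. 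Letting $S = \sum_{i=1}^{b} Z_i$ and using $\norm{s(x) - s(y)}_1 = b - S$, both events in the lemma can be rewritten as one-sided deviations of $S$ from its mean.

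The threshold $\lambda = (1+p_2)/2 + (p_1 - p_2)/4$ is by construction the midpoint of $(1+p_2)/2$ and $(1+p_1)/2$, so after the translation $b - S \leftrightarrow S$ each of the two events requires $S$ to deviate from its mean by at least $b(p_1 - p_2)/4$ in the corresponding direction. A one-sided Hoeffding inequality applied to the sum of $b$ independent $\{0,1\}$-valued random variables with deviation $t = b(p_1 - p_2)/4$ yields the tail bound $\exp(-2t^2/b) = \exp(-b(p_1 - p_2)^2 / 8)$, which matches the exponent stated in the lemma. Since both events are symmetric one-sided tails with the same gap, a single application of Hoeffding handles the close and far cases identically.

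The only step that needs care is the computation of $\E[Z_i]$: when $h_i(x) \neq h_i(y)$, one must verify that the universally hashed bits $f_i(h_i(x))$ and $f_i(h_i(y))$ agree with probability exactly $1/2$. This holds whenever $f_i$ is drawn from a universal family mapping into $\{0,1\}$ with uniform marginals, since the two output bits are then individually uniform and pairwise independent on distinct inputs, giving collision probability $1/2$. With this verification in place, the argument is just a direct instantiation of Hoeffding's inequality and no substantive obstacle remains; the main subtlety is only a bookkeeping one, namely making sure the algebraic identity $\E[Z_i] - \lambda = \pm (p_1 - p_2)/4$ lines up with the direction of each tail.
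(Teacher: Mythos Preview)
Your proposal is correct and coincides with the paper's approach: the paper gives no detailed proof but merely states that ``an application of Hoeffding's inequality gives us the following properties of the sketch,'' which is precisely the argument you outline---independent Bernoulli indicators $Z_i$, the expectation computation $\E[Z_i] = (1+\Pr[h_i(x)=h_i(y)])/2$, the midpoint threshold yielding a gap of $(p_1-p_2)/4$, and a one-sided Hoeffding bound producing the exponent $b(p_1-p_2)^2/8$.
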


If we replace the exact distance computations with sketches we want to avoid two events: 
Failing to report a point with $\dist(q,x) \leq r_1$ and reporting a point $x$ with $\dist(q, x) \geq r_2$.
By setting $b = O(\ln(n) / (p_1 - p_2)^2)$ and applying a union bound over the $n$ events that the sketch fails for a point in our collection $P$ we obtain Theorem~\ref{thm:lsh_dkt_ram_sketch}. 
\section{The number of hash functions in corner cases}\label{sec:corner}
When the collision probabilities of the $(r_1, r_2, p_1, p_2)$-sensitive family $\LSH$ are close to one we get the behavior displayed in Figure \ref{fig:p1_090} where we have set $p_1 = 0.9$. 
Here it may be possible to reduce the number of hash functions by applying the DKT framework to the family $\LSH^\tau$ for some positive integer $\tau$. That is, instead of applying the DKT technique directly to $\LSH$ we first apply the powering trick to produce the family $\LSH^\tau$.
The number of locality-sensitive hash functions from $\LSH$ used by the DKT framework is given by $H = O( (\log(n) / \log(1/p_2))^2 / p_1)$.
If we instead use the family $\LSH^\tau$ the expression becomes $H = O( \tau(\log(n) / \log(1/p_2^\tau))^2 / p_1^\tau) = O((\log(n) / \log(1/p_2))^2 / \tau p_1^\tau)$. 
Ignoring integer constraints, the value of $\tau$ that maximizes $\tau p_1^\tau$, thereby minimizing $H$, is given by $\tau = 1 / \ln(1/p_1)$.
Discretizing, the resulting number of hash functions when setting $\tau = \lceil 1 / \ln(1/p_1)\rceil$ is given by $H = O(\rho (\log n)^2 / (p_1 \log(1/p_2)))$.
For constant $\rho$ and large $p_2$ this reduces the number of hash functions by a factor $1/\log(1/p_2)$.
\begin{figure}
	\centering
	\includegraphics[width=0.67\textwidth]{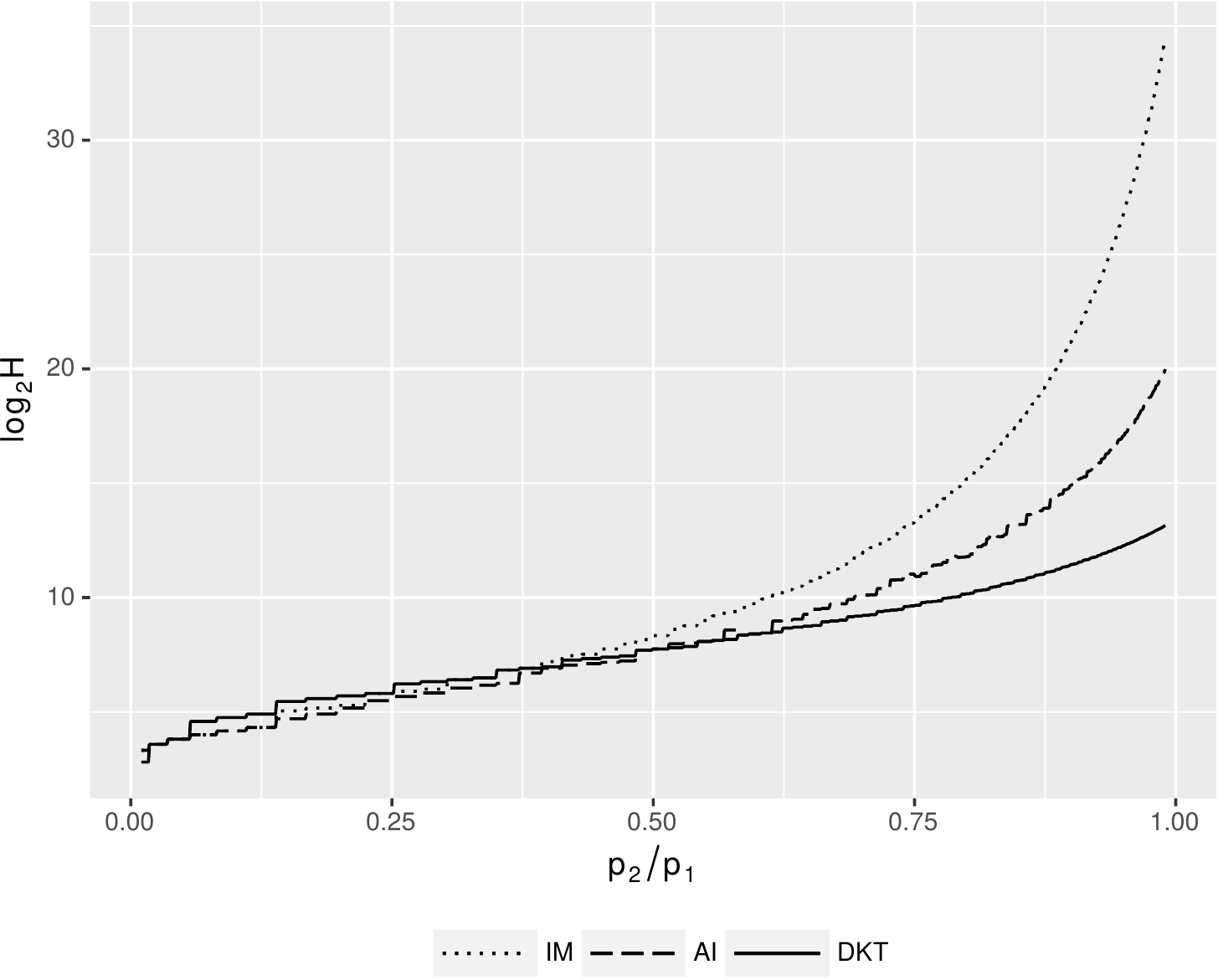}
	\caption{The number of locality-sensitive hash functions from a $(r_1, r_2, 0.9, p_2)$-sensitive family used by different frameworks to solve the $(r_1, r_2)$-near neighbor problem on a collection of $2^{30}$ points.}
	\label{fig:p1_090}
\end{figure}
The behavior for small values of $p_1$ is displayed in Figure \ref{fig:p1_010} where we have set $p_1 = 0.1$.
\begin{figure}
	\centering
	\includegraphics[width=0.67\textwidth]{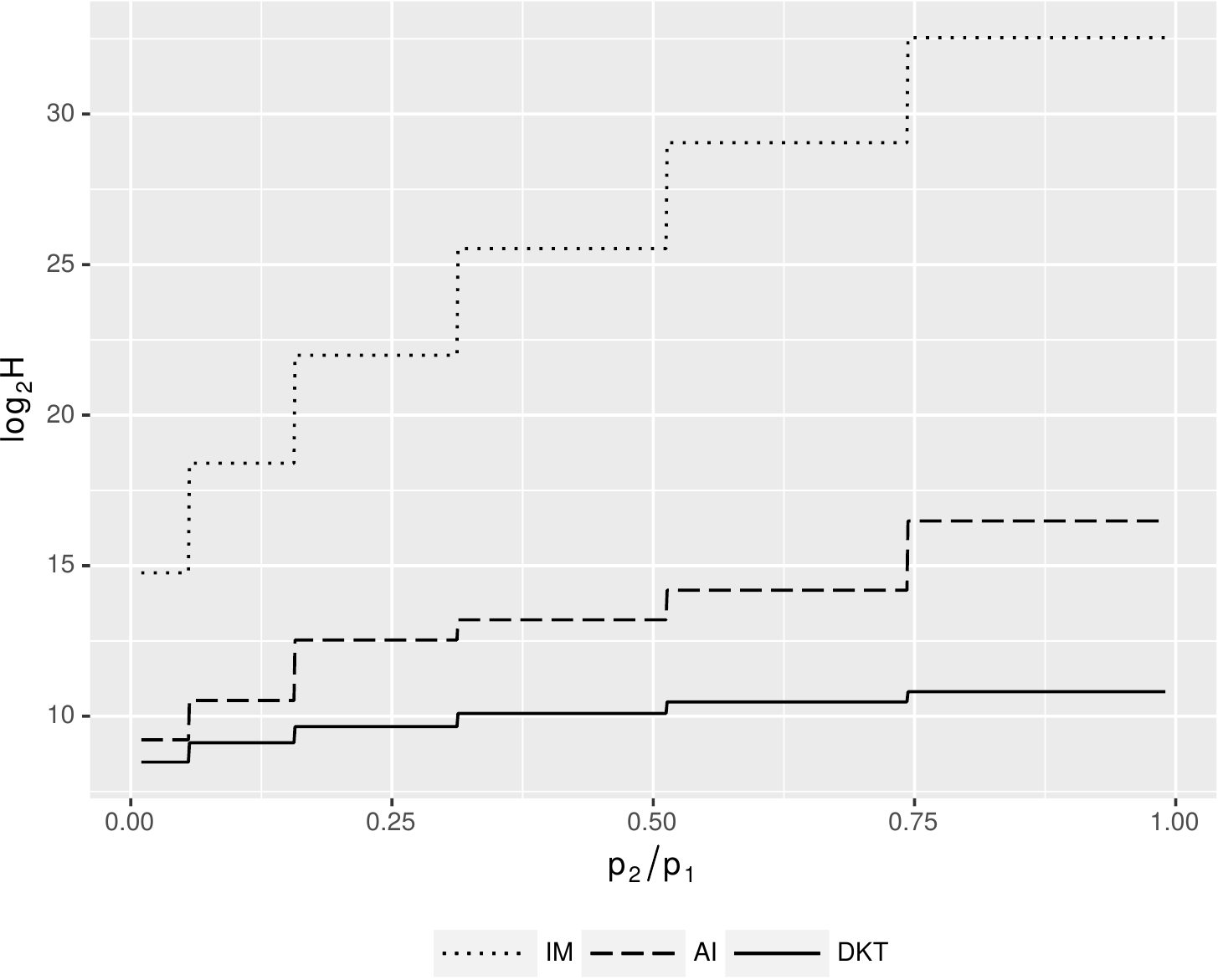}
\caption{The number of locality-sensitive hash functions from a $(r_1, r_2, 0.1, p_2)$-sensitive family used by different frameworks to solve the $(r_1, r_2)$-near neighbor problem on a collection of $2^{30}$ points.}
\label{fig:p1_010}
\end{figure}
\section{Conclusion and open problems}
We have shown that there exists a simple and general framework for solving the $(r_1, r_2)$-near neighbor problem using only few locality-sensitive hash functions and with a reduced word-RAM complexity matching the number of lookups.
The analysis in this paper indicates that the performance of the Dahlgaard-Knudsen-Thorup framework is highly competitive compared to the Indyk-Motwani framework in practice, especially when locality-sensitive hash functions are expensive to evaluate, as is often the case.

An obvious open problem is the question of whether the number of locality-sensitive hash functions can be reduced even below $O(k^2 / p_1)$.
Another possible direction for future research would be to obtain similar framework results in the context of solutions to the $(r_1, r_2)$-near neighbor problem that allow for space-time tradeoffs~\cite{andoni2017optimal, christiani2017framework}.
\subparagraph*{Acknowledgements.}
I want to thank Rasmus Pagh commenting on an earlier version of this manuscript and for making me aware of the application of the tensoring technique in~\cite{sundaram2013streaming} that led me to the Andoni-Indyk framework~\cite{andoni2006efficient}.
\appendix
\section{Inequalities}\label{app:inequalities}
We make use of the following standard inequalities for the exponential function. 
See \cite[Chapter 3.6.2]{mitrinovic1970} for more details.
\begin{lemma}\label{lem:exp_upper}
	Let $n, t \in \mathbb{R}$ such that $n \geq 1$ and $|t| \leq n$ then $e^{-t}(1-t^2 /n) \leq (1 - t/n)^n \leq e^{-t}$.
\end{lemma}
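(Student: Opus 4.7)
The lemma asserts the two-sided bound $e^{-t}(1-t^2/n) \leq (1-t/n)^n \leq e^{-t}$ under $n \geq 1$ and $|t| \leq n$, so both $1-t/n \geq 0$ and $1+t/n \geq 0$. The plan is to handle the two inequalities separately, each via a single application of the standard logarithmic inequality $\ln(1+x) \leq x$ valid for all $x > -1$, combined, for the lower bound, with Bernoulli's inequality. No case analysis on the sign of $t$ is needed once we phrase things in terms of $1 \pm t/n \geq 0$.

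For the upper bound $(1-t/n)^n \leq e^{-t}$, I would first dispose of the boundary case $t = n$ (both sides of the form $0 \leq e^{-n}$), and otherwise take logarithms: since $-t/n > -1$, the inequality $\ln(1-t/n) \leq -t/n$ holds, and multiplying by $n \geq 1 > 0$ yields $n\ln(1-t/n) \leq -t$, which exponentiates to the desired bound.

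For the lower bound $(1-t/n)^n \geq e^{-t}(1-t^2/n)$, the key observation is the factorization
\[
\left(1 - \tfrac{t^2}{n^2}\right)^n = \left(1-\tfrac{t}{n}\right)^n \left(1+\tfrac{t}{n}\right)^n.
\]
By Bernoulli's inequality, since $t^2/n^2 \in [0,1]$, the left-hand side is at least $1 - n \cdot t^2/n^2 = 1 - t^2/n$. Meanwhile the second factor on the right can be upper bounded using $\ln(1+t/n) \leq t/n$ (valid because $t/n > -1$), which after multiplying by $n > 0$ and exponentiating gives $(1+t/n)^n \leq e^{t}$. Combining these two estimates yields
\[
\left(1-\tfrac{t}{n}\right)^n \cdot e^{t} \;\geq\; \left(1-\tfrac{t}{n}\right)^n \left(1+\tfrac{t}{n}\right)^n \;\geq\; 1 - \tfrac{t^2}{n},
\]
and dividing by $e^t$ delivers the claim. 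The only subtle point, rather than any real obstacle, is to check that dividing and multiplying by the factor $1+t/n$ is legitimate even when it equals zero (the case $t = -n$), where one verifies the inequality directly: the left side is $2^n \geq 0$ and the right side is $e^{n}(1-n) \leq 0$ for $n \geq 1$.
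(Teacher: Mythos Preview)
Your proof is correct. The paper does not actually prove this lemma; it merely states it and refers the reader to Mitrinovi\'c's textbook on analytic inequalities, so there is no ``paper's own proof'' to compare against. Your argument is a clean, self-contained derivation: the upper bound is the one-line consequence of $\ln(1+x)\le x$, and for the lower bound the factorisation $(1-t^2/n^2)^n=(1-t/n)^n(1+t/n)^n$ together with Bernoulli on the left factor and $(1+t/n)^n\le e^t$ on the right is exactly the standard route (and indeed essentially how Mitrinovi\'c does it). One minor remark: your boundary discussion of $t=-n$ is harmless but not strictly needed, since you never actually divide by $1+t/n$; the chain $(1-t/n)^n e^t \ge (1-t/n)^n(1+t/n)^n \ge 1-t^2/n$ already holds at that endpoint because both intermediate terms are zero and $1-t^2/n = 1-n \le 0$.
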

\begin{lemma}\label{lem:exp_taylor}
	For $t \geq 0$ we have that $e^{-t} \leq 1 - t + t^2 / 2$.
\end{lemma}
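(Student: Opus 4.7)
\textbf{Proof plan for Lemma \ref{lem:exp_taylor}.} The plan is to give a short calculus argument using successive differentiation. Define
\[
f(t) = \left(1 - t + \tfrac{t^2}{2}\right) - e^{-t},
\]
so that the claim $e^{-t} \leq 1 - t + t^2/2$ for $t \geq 0$ is equivalent to $f(t) \geq 0$ on $[0, \infty)$. I would verify this in three short steps: (i) $f(0) = 1 - 1 = 0$; (ii) $f'(t) = -1 + t + e^{-t}$, in particular $f'(0) = 0$; (iii) $f''(t) = 1 - e^{-t} \geq 0$ for all $t \geq 0$ since $e^{-t} \leq 1$ on that range.

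From (iii) we get that $f'$ is non-decreasing on $[0, \infty)$, and combined with (ii) this yields $f'(t) \geq f'(0) = 0$ for all $t \geq 0$. Hence $f$ itself is non-decreasing on $[0, \infty)$, and combined with (i) this yields $f(t) \geq f(0) = 0$, which is exactly the desired inequality.

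There is essentially no obstacle here; the only thing to notice is that the natural ``one-shot'' approach via Taylor's theorem with Lagrange remainder, $e^{-t} = 1 - t + t^2/2 - e^{-c} t^3/6$ for some $c \in (0, t)$, would give the inequality immediately because $e^{-c} t^3/6 \geq 0$. I would prefer the double-derivative argument above as it avoids invoking Taylor's theorem and keeps the proof self-contained at the level of the rest of Appendix \ref{app:inequalities}.
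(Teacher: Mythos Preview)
Your argument is correct. Both the double-differentiation approach and the Taylor-with-Lagrange-remainder alternative you mention establish the inequality without any gaps.

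As for comparison with the paper: there is nothing to compare against. The paper does not prove Lemma~\ref{lem:exp_taylor}; it simply states it as a standard inequality and refers the reader to \cite[Chapter~3.6.2]{mitrinovic1970}. Your self-contained calculus proof is therefore a genuine addition rather than a restatement, and either of the two routes you outline would be appropriate to include.
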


We make use of a one-sided version of Chebyshev's inequality to show correctness of the Dahlgaard-Knudsen-Thorup LSH framework. 
\begin{lemma}[Cantelli's inequality]\label{lem:cantelli}
Let $Z$ be a random variable with $\E[Z] = \mu > 0$ and $\Var[Z] = \sigma^2 < \infty$ then $\Pr[Z \leq 0] \leq \sigma^2/(\mu^2 + \sigma^2)$. 
\end{lemma}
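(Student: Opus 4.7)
The plan is to reduce Cantelli's inequality to an application of Markov's inequality to a shifted and squared version of $Z$. First I would rewrite the event $\{Z \leq 0\}$ as the equivalent event $\{\mu - Z \geq \mu\}$, and introduce a free nonnegative parameter $\lambda$. Since $\mu + \lambda > 0$, squaring preserves the inequality, so this event is contained in $\{(\mu - Z + \lambda)^2 \geq (\mu + \lambda)^2\}$.

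Next I would apply Markov's inequality to the nonnegative random variable $(\mu - Z + \lambda)^2$, obtaining
\begin{equation*}
\Pr[Z \leq 0] \;\leq\; \frac{\E[(\mu - Z + \lambda)^2]}{(\mu + \lambda)^2}.
\end{equation*}
Expanding the numerator and using $\E[\mu - Z] = 0$ together with $\Var[Z] = \sigma^2$ gives $\E[(\mu - Z + \lambda)^2] = \sigma^2 + \lambda^2$, so the bound becomes $(\sigma^2 + \lambda^2)/(\mu + \lambda)^2$.

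Finally, I would optimize the free parameter. A short calculus computation (differentiating in $\lambda$) shows that the minimum over $\lambda \geq 0$ is attained at $\lambda = \sigma^2 / \mu$, which is positive since $\mu > 0$. Substituting this value collapses the ratio to $\sigma^2/(\mu^2 + \sigma^2)$, yielding the claimed bound. The only real choice in the argument is recognizing that introducing the slack $\lambda$ before squaring is what lets Markov's inequality capture the variance rather than merely the second moment; once the parameterization is fixed everything else is mechanical.
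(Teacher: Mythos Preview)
Your argument is essentially identical to the paper's: both rewrite $\{Z\le 0\}$ as $\{\mu - Z + \lambda \ge \mu + \lambda\}$, square, apply Markov's inequality to obtain $(\sigma^2+\lambda^2)/(\mu+\lambda)^2$, and then take $\lambda=\sigma^2/\mu$. The only cosmetic differences are that the paper names the slack parameter $s$ and simply plugs in the optimal value rather than deriving it via differentiation.
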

\begin{proof}
	For every $s \in \mathbb{R}$ we have that 
\begin{equation*}
\Pr[Z \leq 0] = \Pr[-(Z - \mu) + s \geq \mu + s] \leq  \Pr[(-(Z - \mu) + s)^2 \geq (\mu + s)^2]. 
\end{equation*}
Next we apply Markov's inequality 
\begin{equation*}
\Pr[(-(Z - \mu) + s)^2 \geq (\mu + s)^2] \leq \E[(-(Z - \mu) + s)^2]/ (\mu + s)^2 = (\sigma^2 + s^2)/ (\mu + s)^2 
\end{equation*}
Set $s = \sigma^2 / \mu$ and use that $\sigma^2 = s\mu$ to simplify 
\begin{equation*}
(\sigma^2 + s^2)/ (\mu + s)^2 = (s\mu + s^2)/ (\mu + s)^2 = \sigma^2 / (\mu^2 + \sigma^2).
\end{equation*}
\end{proof}

To analyze the 1-bit sketching scheme by Li and K{\"o}nig we make use of Hoeffding's inequality:
\begin{lemma}[{Hoeffding \cite[Theorem 1]{hoeffding1963}}] \label{lem:hoeffding}
	Let $X_1, X_2, \dots, X_n$ be independent random variables satisfying $0 \leq X_i \leq 1$ for $i \in [n]$.
	Define $\bar{X} = (X_1 + X_2 + \dots + X_n)/n$ and $\mu = \E[\bar{X}]$, then:
	\begin{itemize}
		\item[-] For $0 < \varepsilon < 1 - \mu$ we have that $\Pr[\bar{X} - \mu \geq \varepsilon] \leq e^{-2n\varepsilon^{2}}$.
		\item[-] For $0 < \varepsilon < \mu$ we have that $\Pr[\bar{X} - \mu \leq - \varepsilon] \leq e^{-2n\varepsilon^{2}}$.
	\end{itemize}
\end{lemma}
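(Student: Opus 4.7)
The plan is to prove Hoeffding's inequality via the classical Cram\'er--Chernoff method combined with the MGF bound for bounded centered random variables (Hoeffding's lemma). First I would reduce to centered summands: set $Y_i = X_i - \E[X_i]$, so that each $Y_i$ satisfies $\E[Y_i] = 0$ and $Y_i \in [a_i, b_i]$ with $b_i - a_i \leq 1$ (because $X_i \in [0,1]$). The event $\bar X - \mu \geq \varepsilon$ is the same as $\sum_{i=1}^n Y_i \geq n\varepsilon$.

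Next, for any parameter $s > 0$, I apply the exponential Markov inequality and use independence to factor the moment generating function:
\begin{equation*}
\Pr\!\Bigl[\sum_{i=1}^n Y_i \geq n\varepsilon\Bigr] \;=\; \Pr\!\bigl[e^{s\sum_i Y_i} \geq e^{sn\varepsilon}\bigr] \;\leq\; e^{-sn\varepsilon}\prod_{i=1}^n \E[e^{sY_i}].
\end{equation*}
The main obstacle is the MGF bound itself: for any zero-mean $Y$ supported on $[a,b]$, one needs $\E[e^{sY}] \leq e^{s^2(b-a)^2/8}$. I would establish it by invoking convexity of $e^{sy}$ on $[a,b]$, writing
\begin{equation*}
e^{sy} \;\leq\; \tfrac{b-y}{b-a}\,e^{sa} + \tfrac{y-a}{b-a}\,e^{sb},
\end{equation*}
taking expectations (using $\E[Y]=0$) to obtain $\E[e^{sY}] \leq \tfrac{b}{b-a}e^{sa} - \tfrac{a}{b-a}e^{sb}$, and then analysing $\varphi(u) = -u\theta + \ln\bigl(1 - \theta + \theta e^{u}\bigr)$ with $\theta = -a/(b-a)$ and $u = s(b-a)$. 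A short Taylor expansion gives $\varphi(0) = \varphi'(0) = 0$ and $\varphi''(u) \leq 1/4$ uniformly in $u$ (since $\varphi''(u) = \tfrac{\theta e^u (1-\theta)}{(1-\theta+\theta e^u)^2}$ is the variance of a Bernoulli-type random variable, bounded by $1/4$), so $\varphi(s(b-a)) \leq s^2(b-a)^2/8$. This calculus is the only nontrivial ingredient.

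Applying the lemma with $b_i - a_i \leq 1$ gives $\E[e^{sY_i}] \leq e^{s^2/8}$, hence
\begin{equation*}
\Pr\!\Bigl[\bar X - \mu \geq \varepsilon\Bigr] \;\leq\; e^{-sn\varepsilon + ns^2/8}.
\end{equation*}
Optimising the free parameter by taking $s = 4\varepsilon$ yields the exponent $-2n\varepsilon^2$, which is the claimed upper-tail bound. Finally, the lower-tail bound follows by symmetry: applying the upper-tail bound to the variables $X'_i = 1 - X_i$, which are independent, lie in $[0,1]$, and have mean $1 - \E[X_i]$, converts the event $\bar X - \mu \leq -\varepsilon$ into $\overline{X'} - (1-\mu) \geq \varepsilon$, so the same $e^{-2n\varepsilon^2}$ bound applies. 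The side conditions $\varepsilon < 1-\mu$ and $\varepsilon < \mu$ are exactly what is needed to keep the relevant event nontrivial.
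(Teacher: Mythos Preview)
Your argument is the classical Cram\'er--Chernoff proof of Hoeffding's inequality and is correct in all essentials: the exponential Markov step, the convexity-based derivation of Hoeffding's lemma with the $\varphi'' \leq 1/4$ calculation, the optimisation $s = 4\varepsilon$, and the symmetry reduction for the lower tail all go through as you describe.

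There is nothing to compare against, however: the paper does not prove this lemma at all. It is stated in the appendix purely as a quoted result, with a citation to Hoeffding's original 1963 paper, and is then invoked as a black box in the analysis of the $1$-bit sketching scheme. So your proposal goes well beyond what the paper itself supplies.
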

\section{Analysis of the Andoni-Indyk framework}\label{app:ai}
Let $\varphi$ denote the probability that a pair of points $x, y$ with $\dist(x,y) \leq r_1$ collide in a single repetition of the scheme.
A collision occurs if and only if there there exists at least one hash function in each of the underlying $t + 1$ collections where the points collide.
It follows that
\begin{equation*}
	\varphi = (1 - (1-p_1^{k_1})^{m_1})^t (1 - (1-p_1^{k_2})^m_2).
\end{equation*}
To guarantee a collision with probability at least $1/2$ it suffices to set $\eta = \lceil \ln(2) / \varphi \rceil$.

We will proceed by analyzing this scheme where we let $t \geq 1$ be variable and set parameters as followers:
\begin{align*}
	k &= \lceil \log(n) / \log(1/p_2) \rceil \\
	k_1 &= \lfloor k/t \rfloor \\
	k_2 &= k - tk_1 \\
	m_1 &= \lceil 1/ t p_1^{k_1} \rceil \\
	m_2 &= \lceil 1/ p_1^{k_2} \rceil \\
    \eta &= \lceil \ln(2) / \varphi \rceil.
\end{align*}
To upper bound $L$ we begin by lower bounding $\varphi$.
The second part of $\varphi$ can be lower bounded using Lemma \ref{lem:exp_upper} to yield $(1 - (1-p_1^{k_2})^{m_2}) \geq 1 - 1/e$.
To lower bound $(1 - (1-p_1^{k_1})^{m_1})^t$ we first note that in the case where $p_1^{k_1} > 1/t$ we have $m_1 = 1$ and the expression can be lower bounded by $p_1^{k_1 t} = (p_1^{k_1} m_1)^t \geq  (p_1^{k_1} m_1)^t / 2e$.
The same lower bound holds in the case there $t = 1$.
In the case where $p_1^{k_1} \leq 1/t$ and $t \geq 2$ we make use of Lemma \ref{lem:exp_upper} and \ref{lem:exp_taylor} to derive the lower bound. 
\begin{align*}
	1 - (1-p_1^{k_1})^{m_1} &\geq 1 - e^{-p_1^{k_1 m_1}} \\
 &\geq 1 - (1 - p_1^{k_1} m_1 + (p_1^{k_1} m_1)^2/2) \\
 &\geq p_1^{k_1} m_1 (1 - p_1^{k_1}(1/t p_1^{k_1} + 1)/2) \\
 &\geq p_1^{k_1} m_1 (1 - 1/t). 
\end{align*}
Using the bound $(p_1^{k_1} m_1 (1 - 1/t))^t \geq (p_1^{k_1} m_1)^t / 2e$ we have that 
\begin{equation*}
	\varphi \geq (p_1^{k_1} m_1)^t / 4e \geq (1/t)^t / 4e.
\end{equation*}
We can then bound the number of lookups and the expected number of distance computations
\begin{equation*}
	L = \eta m_1^t m_2 \leq (4e / (p_1^{k_1} m_1)^t + 1) m_1^t (1/p_1^{k_2} + 1) \leq 16e (1/p_1^k). 
\end{equation*}
Note that this matches the upper bound of the Indyk-Motwani LSH framework up to a constant factor.

To bound the number of hash functions from $\LSH$ we use that $k_1 \leq k/t \leq k$ and $k_2 < t$.
\begin{equation*}
	H = \eta (m_1 k_1 t + m_2 k_2) \leq 8e t^t \left(\frac{k}{t p_1^{k/t}} + \frac{t-1}{p_1^{t-1}}\right).
\end{equation*}

\bibliography{fast}

\begin{thebibliography}{10}

\bibitem{andoni2006efficient}
A.~Andoni and P.~Indyk.
\newblock Efficient algorithms for substring near neighbor problem.
\newblock In {\em Proc. {SODA} '06}, pages 1203--1212, 2006.

\bibitem{andoni2006}
A.~Andoni and P.~Indyk.
\newblock Near-optimal hashing algorithms for approximate nearest neighbor in
  high dimensions.
\newblock In {\em Proc. {FOCS} '06}, pages 459--468, 2006.

\bibitem{andoni2008}
A.~Andoni and P.~Indyk.
\newblock Near-optimal hashing algorithms for approximate nearest neighbor in
  high dimensions.
\newblock {\em Commun. {ACM}}, 51(1):117--122, 2008.

\bibitem{andoni2015practical}
A.~Andoni, P.~Indyk, T.~Laarhoven, I.~Razenshteyn, and L.~Schmidt.
\newblock Practical and optimal lsh for angular distance.
\newblock In {\em Proc. {NIPS} '15}, pages 1225--1233, 2015.

\bibitem{andoni2017optimal}
A.~Andoni, T.~Laarhoven, I.~P. Razenshteyn, and E.~Waingarten.
\newblock Optimal hashing-based time-space trade-offs for approximate near
  neighbors.
\newblock In {\em Proc. {SODA} '17}, pages 47--66, 2017.

\bibitem{becker2016}
A.~Becker, L.~Ducas, N.~Gama, and T.~Laarhoven.
\newblock New directions in nearest neighbor searching with applications to
  lattice sieving.
\newblock In {\em Proc. {SODA} '16}, pages 10--24, 2016.

\bibitem{broder2000}
Andrei~Z. Broder, Moses Charikar, Alan~M. Frieze, and Michael Mitzenmacher.
\newblock Min-wise independent permutations.
\newblock {\em J. Comput. Syst. Sci.}, 60(3):630--659, 2000.

\bibitem{carter1977}
J.~L. Carter and M.~N. Wegman.
\newblock Universal classes of hash functions.
\newblock In {\em Proc. {STOC} '77}, pages 106--112, 1977.

\bibitem{carter1979}
J.~L. Carter and M.~N. Wegman.
\newblock Universal classes of hash functions.
\newblock {\em J. Comput. Syst. Sci.}, 18(2):143--154, 1979.

\bibitem{wegman1981}
J.~L. Carter and M.~N. Wegman.
\newblock New hash functions and their use in authentication and set equality.
\newblock {\em J. Comput. System Sci.}, 22(3):265--279, 1981.

\bibitem{charikar2002}
M.~Charikar.
\newblock Similarity estimation techniques from rounding algorithms.
\newblock In {\em Proc. {STOC} '02}, pages 380--388, 2002.

\bibitem{christiani2017framework}
T.~Christiani.
\newblock A framework for similarity search with space-time tradeoffs using
  locality-sensitive filtering.
\newblock In {\em Proc. {SODA} '17}, pages 31--46, 2017.

\bibitem{christiani2017scalable}
T.~Christiani, R.~Pagh, and J.~Sivertsen.
\newblock Scalable and robust set similarity join.
\newblock {\em CoRR}, abs/1707.06814, 2017.

\bibitem{dahlgaard2017fast}
S.~Dahlgaard, M.~B.~T. Knudsen, and M.~Thorup.
\newblock Fast similarity sketching.
\newblock {\em CoRR}, abs/1704.04370, 2017.
\newblock URL: \url{http://arxiv.org/abs/1704.04370}.

\bibitem{dasgupta2011fast}
A.~Dasgupta, R.~Kumar, and T.~Sarl{\'{o}}s.
\newblock Fast locality-sensitive hashing.
\newblock In {\em Proc. {SIGKDD} '11}, pages 1073--1081, 2011.

\bibitem{datar2004}
M.~Datar, N.~Immorlica, P.~Indyk, and V.~S. Mirrokni.
\newblock Locality-sensitive hashing scheme based on p-stable distributions.
\newblock In {\em Proc. {SOCG} '04}, pages 253--262, 2004.

\bibitem{dubiner2010bucketing}
M.~Dubiner.
\newblock Bucketing coding and information theory for the statistical
  high-dimensional nearest-neighbor problem.
\newblock {\em {IEEE} Trans. Information Theory}, 56(8):4166--4179, 2010.

\bibitem{eshgi2008}
K.~Eshghi and S.~Rajaram.
\newblock Locality sensitive hash functions based on concomitant rank order
  statistics.
\newblock In {\em Proc. {KDD} '08}, pages 221--229, 2008.

\bibitem{fredman1984storing}
M.~L. Fredman, J.~Koml{\'{o}}s, and E.~Szemer{\'{e}}di.
\newblock Storing a sparse table with 0(1) worst case access time.
\newblock {\em J. {ACM}}, 31(3):538--544, 1984.

\bibitem{gong2012angular}
Y.~Gong, S.~Kumar, V.~Verma, and S.~Lazebnik.
\newblock Angular quantization-based binary codes for fast similarity search.
\newblock In {\em {NIPS}}, pages 1205--1213, 2012.

\bibitem{hagerup1998}
T.~Hagerup.
\newblock Sorting and searching on the word {RAM}.
\newblock In {\em Proc. {STACS} '98}, pages 366--398, 1998.

\bibitem{har-peled2012}
S.~Har-Peled, P.~Indyk, and R.~Motwani.
\newblock Approximate nearest neighbor: Towards removing the curse of
  dimensionality.
\newblock {\em Theory of computing}, 8(1):321--350, 2012.

\bibitem{hoeffding1963}
W.~Hoeffding.
\newblock Probability inequalities for sums of bounded random variables.
\newblock {\em Jour. Am. Stat. Assoc.}, 58(301):13--30, 1963.

\bibitem{indyk1998}
P.~Indyk and R.~Motwani.
\newblock Approximate nearest neighbors: towards removing the curse of
  dimensionality.
\newblock In {\em Proc. {STOC} '98}, pages 604--613, 1998.

\bibitem{kennedy2017fast}
C.~Kennedy and R.~Ward.
\newblock Fast cross-polytope locality-sensitive hashing.
\newblock In {\em Proc. {ITCS} '17}, pages 53:1--53:16, 2017.

\bibitem{li2011theory}
P.~Li and A.~C. K{\"o}nig.
\newblock Theory and applications of b-bit minwise hashing.
\newblock {\em Communications of the ACM}, 54(8):101--109, 2011.

\bibitem{mitrinovic1970}
Dragoslav~S Mitrinović.
\newblock {\em Analytic inequalities}.
\newblock Springer-Verlag, Berlin, New York, 1970.

\bibitem{overmars1981}
M.~H. Overmars and J.~van Leeuwen.
\newblock Worst-case optimal insertion and deletion methods for decomposable
  searching problems.
\newblock {\em Information Processing Letters}, 12(4):168--173, 1981.

\bibitem{shrivastava2016simple}
A.~Shrivastava.
\newblock Simple and efficient weighted minwise hashing.
\newblock In {\em {NIPS}}, pages 1498--1506, 2016.

\bibitem{shrivastava2017optimal}
A.~Shrivastava.
\newblock Optimal densification for fast and accurate minwise hashing.
\newblock In {\em {ICML}}, volume~70 of {\em Proceedings of Machine Learning
  Research}, pages 3154--3163. {PMLR}, 2017.

\bibitem{sundaram2013streaming}
N.~Sundaram, A.~Turmukhametova, N.~Satish, T.~Mostak, P.~Indyk, S.~Madden, and
  P.~Dubey.
\newblock Streaming similarity search over one billion tweets using parallel
  locality-sensitive hashing.
\newblock {\em {PVLDB}}, 6(14):1930--1941, 2013.

\bibitem{terasawa2007spherical}
K.~Terasawa and Y.~Tanaka.
\newblock Spherical {LSH} for approximate nearest neighbor search on unit
  hypersphere.
\newblock In {\em Proc. {WADS} '07}, pages 27--38, 2007.

\bibitem{wang2014}
J.~Wang, H.~T. Shen, J.~Song, and J.~Ji.
\newblock Hashing for similarity search: {A} survey.
\newblock {\em CoRR}, abs/1408.2927, 2014.
\newblock URL: \url{http://arxiv.org/abs/1408.2927}.

\bibitem{weiss2008spectral}
Y.~Weiss, A.~Torralba, and R.~Fergus.
\newblock Spectral hashing.
\newblock In {\em {NIPS}}, pages 1753--1760. Curran Associates, Inc., 2008.

\bibitem{zobrist1970new}
Albert~L Zobrist.
\newblock A new hashing method with application for game playing.
\newblock {\em ICCA journal}, 13(2):69--73, 1970.

\end{thebibliography}
\end{document}